\renewcommand{\ALG@name}{Mechanism}
\newtheorem{theorem}{Theorem}
\newtheorem{lemma}{Lemma}[section]
\newtheorem{corollary}[lemma]{Corollary}
\theoremstyle{definition}
\newtheorem{definition}{Definition}[section]
\newtheorem{claim}[lemma]{Claim}
\newcommand{\info}[1]{}
\newcounter{game}
\newcommand\eps\epsilon
\newtheorem{fact}{Fact}
\newcommand{\universe}{\ensuremath{\chi}}
\newcommand{\cmax}{\ensuremath{q^*}}
\newcommand{\maxsum}{\textsc{MaxSum}}
\newcommand{\minsum}{\textsc{MinSum}}
\newcommand{\topk}{\textsc{TopK}}
\newcommand{\sumselect}{\textsc{SumSelect}}
\newcommand{\histogram}{\textsc{Histogram}}
\newcommand{\median}{\textsc{Median}}
\newcommand{\quantile}[1]{\textsc{Quantile}_{#1}}
\newcommand{\alg}[1]{\mathcal{A}(#1)}
\newcommand{\Lap}{\mathrm{Lap}}
\newcommand{\densLap}[1]{f_{\Lap(#1)}}
\newcommand{\diff}{\mathrm{diff}}
\newcommand{\mode}{\mathit{mode}}
\newcommand{\Thresh}{\mathrm{Thresh}}
\newcommand{\err}{\mathrm{err}}
\newcommand{\histshort}{\mathrm{h}}
\newcommand{\out}{\mathrm{out}}
\newcommand{\thresh}{\Thresh}
\newcommand{\side}{\mathrm{side}}
\newcommand{\type}{\mathrm{type}}
\newcommand{\T}{\thresh}
\newcommand{\tO}{\ensuremath{\widetilde{O}}}
\newcommand{\amu}[1]{\ensuremath{\alpha_{\mu}^{#1}}}
\newcommand{\atau}[1]{\ensuremath{\alpha_{\tau}^{#1}}}
\newcommand{\agamma}[1]{\ensuremath{\alpha_{\gamma}^{#1}}}
\newcommand{\ah}[1]{\ensuremath{\alpha_{H}^{#1}}}
\newcommand{\amut}{\amu{t}}
\newcommand{\atauj}{\atau{j}}
\newcommand{\agammaj}{\agamma{j}}
\newcommand{\ahj}{\ah{j}}
\newcommand{\threesum}[2]{\amu{#1} + \atau{#2} + \agamma{#2}}
\newcommand{\threesumtj}{\threesum{t}{j}}
\newcommand{\foursum}[2]{\amu{#1} + \atau{#2} + \agamma{#2} + \ah{#2}}
\newcommand{\foursumj}{\foursum{t}{j}}
\newcommand{\amutval}{\ensuremath{12\epsilon^{-1} \ln(2/\beta_t)}}
\newcommand{\ataujval}{\ensuremath{6\epsilon^{-1} \ln(6/\beta_j)}}
\newcommand{\agammajval}{\ensuremath{3\epsilon^{-1} m \ln (6m/\beta_j)}}
\newcommand{\ahjval}{\ensuremath{\errsix{j}}}
\newcommand{\murv}{$\Lap(12/\epsilon)$}
\newcommand{\taurv}{$\Lap(6/\epsilon)$}
\newcommand{\gammarv}{$\Lap(3m/\epsilon)$}
\newcommand{\kcmax}{\ensuremath{{m \cmax}}}
\newcommand{\kcmaxt}{\ensuremath{m { \cmax }^t}}
\newcommand{\edee}{\ensuremath{(\epsilon/3,\delta/(2e^{2\epsilon/3}))}}
\newcommand{\edu}{\ensuremath{6\epsilon^{-1} \sqrt{m \ln (12 e^{2\epsilon/3} m /(\delta\beta_j))} }}
\newcommand{\edgammarv}{\ensuremath{N(0, 18k\ln (4e^{2\epsilon/3}/\delta)/\epsilon^2)}}
\newcommand{\errsix}[1]{\ensuremath{\errgen{#1, \beta_{#1}/6}}}
\newcommand{\jinterval}{\ensuremath{[p_{j-1}, p_j)}}
\newcommand{\he}{\ensuremath{\epsilon/3}}
\newcommand{\hd}{\ensuremath{\delta/(2e^{2\epsilon/3})}}
\newcommand{\assumeconc}{Assume \bounds\ holds. }
\newcommand{\bounds}{Lemma~\ref{applem:accconc}}
\newcommand{\lb}{Lemma~\ref{applem:accgiLB}}
\newcommand{\ub}{Lemma~\ref{applem:accgiUB}}
\newcommand{\lapbound}{Lemma~\ref{fact:laplace_tailbound}}
\newcommand{\pj}{\ensuremath{{p_{j}}}}
\newcommand{\last}{\ensuremath{{\ell}}}
\newcommand{\plast}{\ensuremath{{p_{\last}}}}
\newcommand{\tfirst}{\ensuremath{{t_{\mathrm{first}}}}}
\newcommand{\eaccvalue}{\ensuremath{{ O\left(\left( d \log^2(d\kcmax/\beta) + m\log(\kcmax/\beta) + \log t \right) \epsilon^{-1} \right) }}}
\newcommand{\accalg}{Mechanism~\ref{alg:histquery}}
\newcommand{\Ki}{\ensuremath{L_{i}}}
\newcommand{\Kit}{\ensuremath{{\Ki^t}}}
\newcommand{\Kitplusone}{\ensuremath{{\Ki^{t+1}}}}
\newcommand{\Kipj}{\ensuremath{{\Ki^\pj}}}
\newcommand{\Kiplast}{\ensuremath{{\Ki^\plast}}}
\newcommand{\Kione}{\ensuremath{{\Ki^1}}}
\newcommand{\errgen}[1]{\mathrm{err}(#1)}
\newcommand{\threshcross}[1]{#1 crosses the threshold}
\newcommand{\notthreshcross}[1]{#1 did not cross the threshold}
\newcommand{\unbddaccj}{\ensuremath{O(\epsilon^{-1} d \cdot ( \log (j) \log(d/\beta) + (\log j)^{1.5} \sqrt{\log(d/\beta)} ))}}
\newcommand{\alphatime}[1]{\ensuremath{\alpha^{(#1)}}}
\newcommand{\alphat}{\ensuremath{\alphatime{t}}}
\newcommand{\yes}{\textsc{Yes}}
\newcommand{\no}{\textsc{No}}
\begin{document}

\title{Differentially Private Continual Release \\ of Histograms and Related Queries}

\author[1]{Monika Henzinger}
\affil[1]{Institute of Science and Technology Austria (ISTA), Klosterneuburg, Austria}
\author[2]{A. R. Sricharan}
\affil[2]{Faculty of Computer Science, University of Vienna, Austria}
\affil[2]{UniVie Doctoral School Computer Science DoCS}
\author[3]{Teresa Anna Steiner}
\affil[3]{University of Southern Denmark Odense, Denmark}

\maketitle

\begin{abstract}
We study privately releasing column sums of a $d$-dimensional table with entries from a universe $\chi$ undergoing $T$ row updates, called histogram under continual release.
Our mechanisms give better additive $\ell_\infty$-error than existing mechanisms for a large class of queries and input streams.
Our first contribution is an output-sensitive mechanism in the insertions-only model ($\chi = \{0,1\}$) for maintaining (i) the histogram or (ii) queries that do not require maintaining the entire histogram, such as the maximum or minimum column sum, the median, or any quantiles.
The mechanism has an additive error of $O(d\log^2 (d\cmax)+\log T)$ whp, where $\cmax$ is the maximum output value over all time steps on this dataset. The mechanism does not require $\cmax$ as input.
This breaks the $\Omega(d \log T)$ bound of prior work when $\cmax \ll T$.
Our second contribution is a mechanism for the turnstile model  that admits negative entry updates
($\chi = \{-1, 0,1\}$). This mechanism has an additive error of $O(d \log^2 (dK) + \log T)$ whp, where $K$ is the number of times two consecutive data rows differ, and the mechanism does not require $K$ as input.
This is useful when monitoring inputs that only vary under unusual circumstances.
For $d=1$ this gives the first private mechanism with error $O(\log^2 K + \log T)$ for continual counting in the turnstile model, improving on the $O(\log^2 n + \log T)$ error bound by \citet{conf/asiacrypt/DworkNRR15}, where $n$ is the number of ones in the stream,
as well as allowing negative entries, while \citet{conf/asiacrypt/DworkNRR15}
can only handle nonnegative entries ($\chi=\{0,1\}$).
 \end{abstract}

\section{Introduction}

\info{Motivate \\ continual \\ counting}
Maintaining continual sums of a stream of numbers is an integral subroutine in various applications, including iterative first-order methods in machine learning and online convex optimization, which require maintaining the sum of all the past gradients to make future decisions. Private mechanisms for these problems thus rely on privately maintaining continual sums of gradients~\citep{kairouz2021practical, zhang22onlinetobatch, neurips2022, choo23factorization,asi2023near, choo23banded, xu23gboard}, or more generally of a stream of numbers~\citep{Dwork2010, DBLP:journals/tissec/ChanSS11, DBLP:conf/icml/FichtenbergerHU23, henzinger23counting, Andersson23smoothbm,DBLP:conf/soda/HenzingerU25}, with minimal error. The privacy model used
is the well-studied
model of \emph{differential privacy under continual observation}~\citep{Dwork2006, Dwork2010}, which requires that the output distributions of a private mechanism are close on neighboring streams. When the elements of the stream are nonnegative, this is the \emph{insertions-only} model, while the \emph{turnstile model} allows negative numbers as well.

\info{DP CC \\ history \\ and \\ motivation}
Since the initial work of \citet{Dwork2010} and \citet{DBLP:journals/tissec/ChanSS11},
achieving an asymptotic improvement in the additive error of private continual counting or proving that the current bound is optimal has become a major open problem in the field.
Recent work has concentrated on non-asymptotic improvements:
(1)  \citet{DBLP:conf/icml/FichtenbergerHU23,andersson2024countelderslaplacevs,DBLP:conf/soda/HenzingerU25} improved the constant term in front of the larger asymptotic terms in the additive error.
(2) \citet{conf/asiacrypt/DworkNRR15} improved the error to optimal on sparse streams in the insertions-only setting with an error bound that is parameterized in the maximum output value.
Currently, all improvements to the additive error are interesting since the huge error terms in existing mechanisms lead to large choices of privacy parameters in practice to ensure accurate outputs, which  in turn leads to reduced privacy guarantees for these implementations.
We present two improvements to a high-dimensional generalization of this problem in the spirit of the latter parameterized result of \citet{conf/asiacrypt/DworkNRR15}.

At a high level, \citet{conf/asiacrypt/DworkNRR15} obtain their parameterized improvements by partitioning the stream of inputs into \emph{intervals}, then batching all the inputs within an interval as a single input to a black-box continual counting mechanism. They show that the number of intervals created is proportional to the sparsity of the input stream. This reduces the error dependence on the length of the input stream to just its sparsity.

\info{DP Hist}
Extending continual sum of a stream of scalars to maintaining continual coordinate sums of a stream of $d$-dimensional vectors is a natural generalization when one needs to keep track of high-dimensional information,
as is the case in the above machine learning applications.
We study this setting, called the \histogram\ problem, which requires maintaining for every $i \in [d]$, the sum of the $i$-th coordinate of all (row) vectors seen so far, called the $i$-th \emph{column sum}.
We also show how to more accurately answer \emph{histogram queries} that do not necessarily require maintaining the entire histogram, for example, the minimum or %
median column sum.
Our error metric is the maximum error between the mechanism's output and the true output, where the maximum is over all coordinates, all time steps, and all queries, called the \emph{$\ell_{\infty}$-error}.
\begin{definition}[Continual Histogram]
\label{def:cont_hist}
Let $d \in \mathbb N_{> 0}$. The input is a stream of $T$ row vectors $x^1,\dots, x^T$ with $x^t\in\universe^d$ for all $t\in[T]$ where $\universe$ is the data universe and $T$ is not given as input.
In the \emph{insertions-only} setting, $\universe=\{0,1\}$. In the \emph{turnstile model}, $\universe=\{-1,0,1\}$.
The output of \histogram\ at each time step $t$ is (an additive approximation to) the column sums of all inputs seen so far, i.e., $\histshort(t)=(\sum_{\ell=1}^t x_i^\ell)_{i\in[d]}$.
When $d=1$, this is the \emph{continual counting} problem.
\end{definition}

An important definition in differential privacy is that of \emph{neighboring streams}, which specifies the level of privacy guaranteed by the mechanism. We use the standard event-level neighboring definition which allows two neighboring streams to differ in all the coordinates of exactly one of the $d$-dimensional input vectors.

\paragraph*{Motivating Questions}

\info{Motivation for \\ Mechanism 1}
We discuss $\eps$-differential privacy below, and drop $\eps^{-1}$ and $\log(1/\beta)$ factors ($\beta$ is the failure probability) in the discussion for simplicity.
For private continual counting, the best known lower bound on the additive error is $\Omega(\log T)$~\citep{Dwork2010} against an upper bound of $O(\log^2 T)$~\citep{Dwork2010, DBLP:journals/tissec/ChanSS11}. In the parameterized setting, \citet{conf/asiacrypt/DworkNRR15} presented a mechanism with $O(\log^2 n + \log T)$ error for insertions-only streams, where $n$ is the largest output of the mechanism on the entire stream.
Since this mechanism asymptotically improves on all other mechanisms for streams with continual count $T^{o(1)}$ and is optimal when the continual count is $O(2^{\sqrt{\log T}})$, we ask the following generalization to $d$-dimensional streams.
\begin{tcolorbox}
\begin{center}
Can we exploit sparsity of outputs for private continual \histogram\ and histogram queries to obtain a smaller additive error?
\end{center}
\end{tcolorbox}

\info{Mention \\ histqueries}
Note that the requirement for histogram queries is stronger than that of \histogram, since the histogram query of, say, the minimum column sum has a much smaller output than that of the entire \histogram.

\info{Why new \\ mechanisms}
The best known lower bound here is $\Omega(d + \log T)$~\citep{pricedpco, Dwork2010}, and we show that existing mechanisms for \histogram\ have an additive error of $\Omega(d \log T)$. Thus, achieving $o(d \log T)$ error necessitates new approaches. Our first mechanism improves on existing mechanisms for streams and queries with maximum output value $T^{o(1)}$, and breaks the $\Omega( d \log T)$ barrier on streams with maximum output value $o(2^{\sqrt{\log T}})$, which answers the highlighted question in the affirmative.
The mechanism works by partitioning the input stream as in \citet{conf/asiacrypt/DworkNRR15}, while taking the interaction between all $d$ columns into account. This extension to the multi-dimensional setting involves overcoming multiple technical difficulties which we detail later.

\info{Motivation for\\ Mechanism 2}
Another shortcoming of the mechanism of \citet{conf/asiacrypt/DworkNRR15} is that it does not allow negative entries. It implicitly assumes that the continual count is increasing \emph{monotonically}, which might not be true for turnstile streams. This motivates our second main question.
\begin{tcolorbox}
\begin{center}
Can we obtain parameterized improvements (similar to~\citet{conf/asiacrypt/DworkNRR15}) for turnstile streams?
\end{center}
\end{tcolorbox}

\info{Why turnstile \\ important}
Improvements in the turnstile model are important since continual histogram mechanisms in this model are used as subroutines in more advanced private continual release mechanisms, such as in the above machine learning applications, fully dynamic graph mechanisms~\citep{DBLP:conf/esa/FichtenbergerHO21,DBLP:conf/icml/FichtenbergerHU23}, for $k$-means clustering in Euclidean spaces~\citep{ICML24}, and also to count the \emph{difference sequence}~\citep{DBLP:conf/esa/FichtenbergerHO21} of an insertions-only stream.

\info{Our result}
We answer our second question in the affirmative, by parameterizing on the \emph{number of fluctuations}, $K$, which is the number of times
a row vector in the input is different from its immediately preceding row vector.
\info{Extensions}
We show that the improvements achieved by our first mechanism also carry over to the second, giving a mechanism for \histogram\ on turnstile streams that breaks the $\Omega(d \log T)$ barrier on streams with small $K$.
At a high level, while existing mechanisms use the sparse vector technique (SVT) to update outputs when the function value changes a lot, our mechanism instead uses SVT to check when the \emph{slope} of the function changes a lot, and we show that one can still obtain improved error bounds under this condition.

\paragraph*{Prior Work}
\label{sec:priorwork}

\info{Setup}
We discuss current approaches to \histogram\ and related queries for $\eps$-differential privacy, and present the $(\eps,\delta)$-dp bounds in the appendix.
The bounds discussed are also presented in Tables~\ref{tab:mech1} and \ref{tab:mech2}.

\info{existing \\ LBs}
On the lower bounds side, \citet{pricedpco} study \histogram\ as well as maintaining the maximum column sum (\maxsum) and its coordinate (\sumselect) privately and show that the additive error\footnote{We focus only on their bounds that are subpolynomial in $T$ since we consider the setting where $T \gg d$.\label{fn:subpolyT}} must be $\Omega(d)$. Combined with the counting lower bound of $\Omega(\log T)$ of \citet{Dwork2010}, this gives an $\Omega(d + \log T)$ bound. \citet{DBLP:conf/colt/Cohen0NSS24} give a lower bound of $\Omega(\min\{n, \log T\})$ for continual counting.
\info{existing \\ UBs}
On the upper bound side, \citet{pricedpco} give an $O(d\log d \log^3 T)$ bound on the additive error\footnote{Footnote \ref{fn:subpolyT} applies as well.}. In the turnstile model, composing $d$ binary tree counting mechanisms of \citet{DBLP:journals/tissec/ChanSS11}, and suitably scaling the privacy parameter and failure probability of each mechanism,
gives an error of $O(d \log^2 (dT))$.
For insertion-only streams, combining the mechanism of \citet{conf/asiacrypt/DworkNRR15} with observations by \citet{QiuYi} and \citet{DBLP:journals/tissec/ChanSS11} gives an error bound of $O({d\log^2 (dn_{\max}) + d\log T})$, where $n_{\max}$ is the maximum column sum.

\info{why new \\ mechanisms}
As seen from the bounds above, all existing histogram mechanisms have an $O(d \log T)$ term in their upper bounds. We show that this dependency is inherent, by proving that any histogram mechanism obtained as a composition of $d$ counting mechanisms has $\Omega(d \log T)$ error. This implies that any mechanism that beats this barrier must necessarily take into account the interaction between the different columns.

\begin{table*}[t]
\begin{center}
\caption{$\ell_{\infty}$-error of $\epsilon$-dp mechanisms for continual histogram queries with constant $\epsilon$, constant failure probability in the insertions-only model, where $n_{\max}$ is the maximum column sum, $\cmax$ is the maximum query output, and $K$ is the number of fluctuations in the input stream. $K$ and $\cmax$ do not need to be given to the mechanism. The binary tree result is by \citet{Dwork2010} and \citet{DBLP:journals/tissec/ChanSS11}, and the partitioning follows from \citet{conf/asiacrypt/DworkNRR15,DBLP:journals/tissec/ChanSS11} and \citet{QiuYi}.}
\label{tab:mech1}
\begin{tabular}{|c|c|c|}
\hline
Mechanism & \histogram  & $m$ histogram queries \\\hline
\cite{pricedpco}  & $O(d\log d\log^3 T)$  & $O(d\log d\log^3 T)$\\
\hline
Binary Tree & $O(d \log^2 (dT))$  & $O(d \log^2 (dT))$  \\\hline
Partitioning & $O({d\log^2 (dn_{\max}) + d\log T})$  & $O({d\log^2 (dn_{\max}) + d\log T})$\\\hline
\textcolor{purple}{Theorem \ref{thm:intro}}  & $O(d\log^2 (dn_{\max})+{\color{purple}\log T})$  & $O(d\log^2 (d{\color{purple}\cmax})+{\color{purple}m\log (m\cmax)}+{\color{purple}\log T})$ \\\hline
\textcolor{purple}{Theorem~\ref{thm:introNew}} & $O\left(d \log^2 ({\color{purple}dK}) +{\color{purple}\log T}\right)$  & $O\left(d \log^2 ({\color{purple}dK}) +{\color{purple}\log T}\right)$\\\hline
\end{tabular}
\end{center}
\end{table*}

\begin{table}
\begin{center}
\caption{$\ell_{\infty}$-error for $\epsilon$-dp mechanisms with constant $\epsilon$ and constant failure probability in the turnstile model, where $K$ is the number of fluctuations. The binary tree result is by \citet{Dwork2010} and \citet{DBLP:journals/tissec/ChanSS11}.}
\label{tab:mech2}
\begin{tabular}{|c|c|}
\hline
Mechanism & \histogram  \\\hline
\citet{pricedpco}  & $O(d\log d\log^3 T)$ \\
\hline
Binary Tree & $O(d\log^2 (dT))$  \\\hline
\textcolor{purple}{Theorem~\ref{thm:introNew}} &  $O\left(d \log^2 ({\color{purple}d K}) +{\color{purple}\log T}\right)$\\\hline
\end{tabular}
\end{center}
\end{table}

\section{Our Results}
\label{sec:our_results}

\info{our results}
To answer the questions raised above, we present two new mechanisms.
Both mechanisms achieve a parameterized error bound which \emph{does not have an additive $d\log T$ term}, but is instead of the form $O(d\log^2 \rho +\log T)$, with $\rho$, one of the above-mentioned parameters, potentially much smaller than $T$.

\info{first result}
{\bf Mechanism~\ref{alg:histquery}:} Our first mechanism  gives new parameterized upper bounds on the additive error in the insertions-only model for computing a large class of histogram-based queries that include $\histogram$, \maxsum\ and \minsum\footnote{\minsum: Return the minimum column sum.}, \sumselect, as well as $\quantile{p}$\footnote{$\quantile{p}$: Return the $p$-th quantile column sum.} and \topk\footnote{\topk: Return the $k$-th largest column sums.}, where the parameter depends on the queries answered.
More specifically, the additive error of the mechanism is $O\left({d\log^2 (d\cmax)+\log T}\right)$, where $\cmax$ is the \emph{maximum query value for the given input data at any time step}. Thus
if $\cmax =o(2^{\sqrt{\log T}})$,
our result breaks the $\Omega(d \log T)$ bound and is better than what can be achieved by the previous mechanisms, even when the maximum column sum $n_{\max} = \Omega(T)$. Our mechanism does not need to be given $\cmax$ or $T$ at initialization. The lower bound of $\Omega(d +\log T)$ mentioned earlier implies that the dependency on $d$ or $\log T$ cannot be removed.

\info{motivate \\ monotone \\ hist queries}
More generally, we define a class of real-valued queries, called \emph{monotone histogram queries},  that subsumes the queries mentioned above.
These queries are \emph{monotonically increasing} when a new (nonnegative) row is added, and have low \emph{sensitivity}.
Informally, a query's \emph{sensitivity} is the maximum difference between its output values on two neighboring streams (see Def.~\ref{def:sensitivity}).

\begin{definition}[Monotone histogram query]
\label{def:monotone_hist_query}
A function $q:\{\{0,1\}^d\}^*\rightarrow \mathbb{R}_{\ge 0}$  is a \emph{monotone histogram query} if it is monotonically increasing; has sensitivity $\le 1$; and depends only on the input column sums\footnote{symmetrically, this also works for monotonically decreasing functions in the deletions-only setting.}.
\end{definition}

\info{examples \\ of MHQ}
Each individual coordinate sum satisfies this definition, as do all the
queries described above. \histogram\ can be obtained with $d$ monotone histogram queries, namely, each coordinate sum.

\begin{theorem}
\label{thm:intro}
Let $x=x^1, \ldots, x^T$ be an insertions-only stream, and $q_1, \dots, q_m$ be $m$ monotone histogram queries. Mechanism~\ref{alg:histquery} is $\eps$-differentially private, and answers $(q_1(x^1,\dots,x^t), \dots, q_m(x^1,\dots,x^t))$ at all time steps $t$ with a bound on the $\ell_{\infty}$-error of $$O \left(\left(d \log^2 (dm\cmax/\beta) + m\log (m\cmax/\beta) +\log T\right)\epsilon^{-1}\right)$$ that holds with probability $\ge 1-\beta$ simultaneously over all time steps, where $\cmax=\max_{k \in [m]} q_k(x)$. Neither $T$ nor $\cmax$ need to be given as input to the mechanism.
\end{theorem}

\info{improvements \\ over prior \\ work}
Answering a single monotone histogram query  using the best existing $\epsilon$-dp mechanisms requires computing a full histogram, which gives error
$O(d\log^2(dn_{\max})+d\log T)$.
Theorem \ref{thm:intro} improves on these bounds in three significant ways: (1) The $d\log T$ term is replaced by a $\log T$ term, giving the first mechanism that achieves a $o(d \log T)$ bound on sparse outputs. (2) The polylogarithmic dependency of $d\log^2(dn_{\max})$ on the maximum column sum $n_{\max}$ is reduced to a polylogarithmic dependency of $d\log^2(d\cmax)$ on the \emph{maximum query value} $\cmax$.
For monotone histogram queries, we have $\cmax \le n_{\max}$, and $\cmax$ could be much smaller than $n_{\max}$, when computing, say, the minimum column sum or the median column sum. This is true, for example, on streams with a power-law distribution of the column sums.
(3) We can answer up to $m$ queries \emph{without incurring an extra additive $O(m\log T)$ error}, which would happen when using standard composition.
We extend these results to natural-numbered inputs, to $(\eps, \delta)$-dp, and to different neighboring definitions in Section~\ref{sec:extensions}.

\info{2nd result}
{\bf Mechanism~\ref{alg:few_switches_hist}:}
Our second mechanism gives new parameterized upper bounds in the turnstile model for continual counting and \histogram. This mechanism has an additive error of $O(d \log^2 (dK) + \log T)$, where $K$ is the \emph{number of fluctuations}, which is the number of time steps where $x^t \neq x^{t+1}$.
 It does not need to be given $K$ or $T$ at initialization. This is the first improvement in the turnstile model since the 2011 bound of $O(\log^2 T)$ for continual counting (and $O(d \log^2 (dT) + d \log T)$ for \histogram) by \citet{DBLP:journals/tissec/ChanSS11}. Recall that the parameterized result of \citet{conf/asiacrypt/DworkNRR15} only works for insertions-only streams.

\info{motivate \\ numswitch}
Our result generalizes the bound of \citet{conf/asiacrypt/DworkNRR15} in two regards: (1) to possibly negative-valued inputs, and (2) to $d$-dimensional inputs.
Our bound improves or matches the additive error of \citet{conf/asiacrypt/DworkNRR15} on insertions-only streams, since $K\leq 2dn_{\max}$. However, $K$ might be considerably smaller: a data stream that contains $n/2$ ones followed by $n/2$ zeros has $K = 1$, while $n_{\max} = T/2$. Moreover, real world data often show strong time correlations, leading to a small value of $K$. Examples include recommendation systems, where movies or products that are popular at a given time are likely to be rated consecutively by more people, and outlier monitoring processes, where many of the generated reports are identical (when nothing special has happened).
\begin{theorem}
\label{thm:introNew}
Let $x=x^1,\dots, x^T$ be a turnstile stream.  Let $K$ be the total number of times $x^{t}\neq x^{t+1}$, for all $t <T$.
Mechanism~\ref{alg:few_switches_hist} is $\eps$-differentially private, and outputs an estimate of the histogram at all time steps $t$ with $\ell_{\infty}$-error bound $O\left( \left(d \log^2 (dK/\beta) + \log T\right) \epsilon^{-1} \right)$ that holds with probability at least $1-\beta$ simultaneously over all time steps. Neither $T$ nor $K$ need to be given as input to the mechanism.
\end{theorem}

Our result also gives new insights about stronger continual counting lower bounds:
(a)  Input streams consisting of $O(2^{\sqrt{\log T}})$ non-zero entries will not lead to a stronger lower bound for continual counting~\citep{conf/asiacrypt/DworkNRR15}.
(b) Even further, the input streams would need to change frequently, i.e.,~$\omega(2^{\sqrt{\log T}})$ times, unlike the streams used in the current lower bounds of ~\citep{Dwork2010,DBLP:conf/colt/Cohen0NSS24}.

Note that tight lower bound in $K$, i.e., a lower bound of $\Omega(\log^2 K+\log T)$ for $d=1$ that holds for all values of $K$ would also imply a lower bound of $\Omega(\log^2 T)$ for continual counting (since $K$ could be as large as $T$), which is a major open problem in the area.

\section{Preliminaries}
\label{sec:prelim}

\paragraph{\bf Notation} We denote the set $\{1,\dots,n\}$ by $[n]$.

\paragraph{\bf Continual Release Model} In the continual release model, at every time step $t$, we add an element $x^t \in \universe^d$ to the current data set. Note that for simplicity, we define the model as adding $x^t$ to the data set; in the turnstile model we allow $x^t$ to have negative entries, capturing both insertions and deletions.
The entire stream of insertions is of length $T$, and $T$ is not given as input to the mechanism.

\begin{definition}\label{def:neighbouring}
Two streams $x = x^1,\dots,x^T$ and $y = y^1,\dots,y^T$ with $x^t, y^t\in \universe^d$ for all $t \in [T]$ are (event-level) \emph{neighboring streams} if there exists a time step ${t^{*}} \in [T]$ such that $x^t=y^t$ for all $t\neq{t^{*}}$ and $\|x^{t^{*}}-y^{t^{*}}\|_{\infty}\leq 1$ for insertion-only streams,
and $\|x^{t^{*}}-y^{t^{*}}\|_{\infty}\leq 2$ for turnstile streams.
\end{definition}

\paragraph{Continual Release Mechanism}
A mechanism $A$ in this model receives input $x^t \in \universe$ at every time step $t$, and produces an output $a^t=A(x^1,\dots,x^t)$ which may only rely on $x^1$ to $x^t$. $A^T(x)=(a^1,a^2,\dots,a^T)$ is the collection of the outputs at all time steps $\le T$.

\begin{definition}[Differential privacy \citep{Dwork2006}]
\label{def:dp}
A randomized mechanism $A$ on a domain $\universe^T$ is \emph{$(\epsilon,\delta)$-differentially private ($(\epsilon,\delta)$-dp)} if for all $S\in \mathrm{range}(A^T)$ and all neighboring $x,y\in \universe^T$ we have
\begin{align*}
    \Pr[A^T(x)\in S]\leq e^{\epsilon}\Pr[A^T(y)\in S]+\delta.
\end{align*}
If $\delta=0$ then $A$ is \emph{$\epsilon$-differentially private ($\epsilon$-dp)}.
\end{definition}

\begin{definition}[$L_p$-sensitivity]
\label{def:sensitivity}
The \emph{$L_p$-sensitivity} of $f :\chi^*\rightarrow \mathbb{R}^k$ is $ \max_{x,y\textnormal{ neighboring}}||f(x)-f(y)||_{p} $.
If $k=1$, then this, the \emph{sensitivity of $f$}, is equal for all $p$.
\end{definition}

We use the Laplace distribution to ensure privacy.

\begin{definition}[Laplace Distribution]
The \emph{Laplace distribution} centered at $0$ with scale $b$ is the distribution with probability density function %
\begin{align*}
\densLap{b}(x)=\frac{1}{2b}\exp\left(\frac{-|x|}{b}\right).
\end{align*}
We use $X\sim \Lap(b)$ or just $\Lap(b)$ to denote a random variable $X$ distributed according to $\densLap{b}(x)$.
\end{definition}

\begin{fact}[\citealp{Dwork2006}] \label{lem:Laplacemech} Let $f:\universe^*\rightarrow \mathbb{R}^k$ be any function with $L_1$-sensitivity $\Delta_1$. Let $Y_i\sim \Lap(\Delta_1/\epsilon)$ for $i\in[k]$. The mechanism $A(x)=f(x)+(Y_1,\dots,Y_k)$
satisfies $\epsilon$-dp.
\end{fact}
\begin{fact}\label{fact:laplace_tailbound}
If $Y \sim \Lap(b)$, then $P(|Y|\geq t\cdot b)=\exp(-t)$.
\end{fact}
\begin{fact}[Simple Composition~\citep{eurocrypt/DworkKMMN06}]
\label{fact:composition_theorem}
Let $A_1 :\chi^*\rightarrow \mathrm{range}(A_1)$ and $A_2: \universe^*\times\mathrm{range(A_1)}\rightarrow \mathrm{range}(A_2)$ be $\epsilon_1$ and $\epsilon_2$-dp mechanisms resp.. Then $A_1\circ A_2$ is $\epsilon_1+\epsilon_2$-dp.
\end{fact}

\paragraph{Differential Privacy Against an Adaptive Adversary}

As a subroutine, we use a continual histogram mechanism that works in the stronger \emph{adaptive continual release model} defined by \citet{pricedpco}.
In this model, the mechanism $M$ interacts with a \emph{randomized adversarial process $Adv$} that has no restrictions on its time or space complexity. It knows the mechanism $M$ and all its inputs and outputs up to the current time step, but \emph{not} its random coin flips.
Based on this, $Adv$ has to choose the input to $M$ for the next time step.

\emph{Event-level neighboring inputs} are modelled as follows.
All time steps except one are \emph{regular}, and the adversary is allowed to adaptively determine when the special \emph{challenge} time step occurs.
At a regular time step, $Adv$ outputs one value $x^t$. At a challenge time step, $Adv$ outputs two values $x^t_{(L)}$ and $x^t_{(R)}$ such that $x^1,\dots, x^t_{(L)},x^{t+1},\dots$ and $x^1,\dots, x^t_{(R)},x^{t+1},\dots$ are neighboring (here, $||x^t_{(L)}-x^t_{(R)}||_{\infty}\leq 1$). At the challenge time step, an external oracle selects one of these two inputs and sends it to $M$. The oracle decides before the beginning of the interaction whether it sends the first or the second input to $M$.
Importantly, this decision is not known either to $Adv$ or $M$. The goal of the adversary is to determine which decision was made by the oracle, while the goal of the mechanism is to return the computed output, e.g., a histogram, such that $Adv$ does not find out which decision was made by the oracle.

\setcounter{algorithm}{-1}

\begin{algorithm}[th]
\makeatletter
\renewcommand*{\ALG@name}{Game}
\makeatother
\begin{algorithmic}[1]

\State {\bf Input:} Stream length $T\in \mathbb{N}$, $\side\in \{L,R\}$ (not known to $Adv$ and $M$)\;
\For{$t \in [T]$}
    \State $Adv$ \textbf{outputs} $\type^t\in\{\texttt{challenge}, \texttt{regular}\}$, where \texttt{challenge} is only chosen for exactly one value of $t$
    \If{$\type^t=\texttt{regular}$}
        \State $Adv$ \textbf{outputs} $x^t\in \universe$ which is sent to $M$\EndIf
    \If{$\type^t=\texttt{challenge}$}
        \State $Adv$ \textbf{outputs} $(x^t_{(L)},x^t_{(R)})\in \universe^2$
        \State $x^t_{(\side)}$ is sent to $M$\EndIf
    \State $M$ outputs $a_t$
\EndFor
\end{algorithmic}
\caption{Privacy game $\Pi_{M,Adv}$ for the adaptive continual release model}
\label{appalg:adaptivemodel}
\end{algorithm}

More formally the relationship between $Adv$ and $M$ is modeled as
a game between adversary $Adv$ and mechanism $M$, given in Game~\ref{appalg:adaptivemodel}.

\begin{definition}[Differential privacy in the adaptive continual release model \citep{pricedpco}]
\label{appdef:adaptive}
Given a mechanism $M$ the \emph{view} of the adversary $Adv$ in game $\Pi_{M,Adv}$ (Game~\ref{appalg:adaptivemodel}) consists of $Adv$'s internal randomness, as well as the outputs of both $Adv$ and $M$. Let $V_{M,Adv}^{(\side)}$ denote $Adv$'s view at the end of the game run with input $\side\in\{L,R\}$. Let $\mathcal{V}$ be the set of all possible views. Mechanism $M$ is \emph{$(\epsilon,\delta)$-differentially private in the adaptive continual release model} if, for all adversaries $Adv$ and any $S\subseteq\mathcal{V}$,
\begin{align*}
\Pr(V_{M,Adv}^{(L)}\in S)\leq e^{\epsilon}\Pr(V_{M,Adv}^{(R)}\in S)+\delta
\end{align*} and
\begin{align*}
\Pr(V_{M,Adv}^{(R)}\in S)\leq e^{\epsilon}\Pr(V_{M,Adv}^{(L)}\in S)+\delta.
\end{align*}
We also call such a mechanism \emph{adaptively $(\epsilon,\delta)$-differentially private}.
\end{definition}

\citet{neurips2022} shows that $\epsilon$-dp under continual release implies $\epsilon$-dp against an adaptive adversary, which is not true for $(\epsilon,\delta)$-dp in general.
\begin{fact}[Prop 2.1 of \citet{neurips2022}]\label{fact:epsadaptive}
Every mechanism that is $\epsilon$-differentially private in the continual release model is also $\epsilon$-dp in the adaptive continual release model.
\end{fact}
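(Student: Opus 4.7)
The plan is to reduce the adaptive game to the non-adaptive definition of $\epsilon$-DP by conditioning on the adversary's internal randomness, exploiting the pointwise density characterization that is specific to pure $\epsilon$-DP (i.e., the $\delta=0$ case). The core observation is that once the random tape of $Adv$ is fixed, the adversary becomes a deterministic function of $M$'s previous outputs, so the entire view of Game~\ref{alg:adaptivemodel} is determined by $Adv$'s random tape together with $M$'s output sequence $a_1,\ldots,a_T$.

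Fixing an adversary $Adv$ and a measurable set $S$ of views, I would condition on the randomness $r$ of $Adv$. Given $r$, the values $\type_t$, the unique challenge time step $t^*$, the regular inputs $x_t$, and the challenge pair $(x_{t^*}^{(L)}, x_{t^*}^{(R)})$ are all deterministic functions of $(r, a_1, \ldots, a_{t-1})$. For each fixed tuple $(r, a_1, \ldots, a_T)$, let $x^{(L)}$ and $x^{(R)}$ denote the input streams actually fed into $M$ under $\side=L$ and $\side=R$, respectively. By construction these streams agree except possibly at $t^*$, and hence they are neighboring in the standard (non-adaptive) sense of Definition~\ref{def:dp}.

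The probability (density) of observing the view $(r, a_1, \ldots, a_T)$ under a given $\side$ factors as $p_{Adv}(r)\cdot \Pr[M(x^{(\side)}) = (a_1,\ldots,a_T)]$, where $x^{(\side)}$ is now a \emph{fixed} (non-adaptive) stream, since we have conditioned on $(r, a_1, \ldots, a_T)$. Applying non-adaptive $\epsilon$-DP of $M$ to the neighboring pair $x^{(L)}, x^{(R)}$ bounds the ratio of these two factors pointwise by $e^\epsilon$. Integrating against $p_{Adv}$ over all tuples $(r, a_1, \ldots, a_T) \in S$ then yields $\Pr[V_{M,Adv}^{(L)} \in S] \leq e^\epsilon \Pr[V_{M,Adv}^{(R)} \in S]$, and the reverse inequality is symmetric.

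The main obstacle is justifying the use of $\epsilon$-DP pointwise on a single outcome rather than on a measurable set. This step relies on the standard equivalence, valid only for $\delta=0$, between $\epsilon$-DP and a pointwise almost-everywhere bound on Radon--Nikodym derivatives of $M$'s output distributions. A clean route is to first establish the claim for discrete output spaces, where the pointwise manipulation is immediate, and then extend to general spaces by a standard measure-theoretic approximation. Crucially, the whole argument breaks for $\delta>0$: the additive slack does not combine neatly across adaptive rounds, which is exactly why the analogous statement is not true for $(\epsilon,\delta)$-DP in general and why the paper must rely on adaptive privacy being verified mechanism by mechanism in that regime.
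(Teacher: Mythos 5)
The paper states this as a black-box Fact, citing Proposition~2.1 of the referenced work, and does not reproduce a proof; so there is nothing in this paper to compare against directly. That said, your reconstruction is correct and is the standard argument for this claim: condition on the adversary's random tape $r$ and on the transcript $a_1,\ldots,a_T$, observe that this fixes the two candidate input streams $x^{(L)},x^{(R)}$, which agree outside the single challenge time step and hence are neighboring in the non-adaptive sense, factor the joint density of the view as $p_{Adv}(r)\cdot\Pr[M(x^{(\side)})=a]$ (valid because the continual-release model requires $a_t$ to depend only on $x^1,\ldots,x^t$), apply the pointwise density characterization of pure $\epsilon$-DP to this fixed neighboring pair, and integrate. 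You also correctly flag the one place where care is needed --- the pointwise/Radon--Nikodym step is what makes this work only for $\delta=0$, which is consistent with the paper needing a separate, mechanism-specific route (via Theorem~2.1 of the same reference) for the $(\epsilon,\delta)$ case in Fact~\ref{fact:epsdelthist}. No gap.
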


\paragraph{\bf Continual Counting and Histogram}

The additive error bound of the $\epsilon$-dp continual counting mechanism by \citet{DBLP:journals/tissec/ChanSS11}
is $O(\log^2(T/\beta) \epsilon^{-1})$ with probability $1 - \beta$ for \emph{all time steps $t \in [T]$ simultaneously}, where $T$ is not given as input to the mechanism. The inputs are allowed to be integers, and neighboring is as in Definition~\ref{def:neighbouring}.
A simple mechanism for continual histogram is using $d$ binary counting mechanisms%
, one per column. With Facts \ref{fact:composition_theorem} and \ref{fact:epsadaptive}, this yields:%
\begin{fact}
\label{fact:counting}
There is an $\epsilon$-dp mechanism for continual histogram in the adaptive model whose $\ell_\infty$-error is bounded by $O(d\log^2(dT/\beta)\epsilon^{-1})$ with probability $1-\beta$.
\end{fact}
This gives a blackbox reduction from $\epsilon$-dp continual histogram to $\epsilon$-dp continual counting. We show in Appendix~\ref{appsec:lowerbound} that any continual histogram mechanism constructed this way must have an error of $\Omega(d\log T)$:
\begin{lemma}
Let $A$ be any $(\epsilon/d)$-dp continual counting mechanism. Then the histogram mechanism $H$, defined by running $A$ independently for each coordinate, must have an error of $\Omega( d \log (T) \epsilon^{-1})$ at some time step $t\leq T$ with constant probability.
\end{lemma}

\paragraph*{Probability Preliminaries}

\begin{lemma}\label{applem:sum_of_lap}
Let $Y_1,\dots,Y_k$ be independent variables with distribution $\Lap(b)$ and let $Y=\sum_{i=1}^k Y_i$. Then
\begin{align*}
    P(|Y|>2b\sqrt{2\ln(2/\beta_S)}\max(\sqrt{k},\sqrt{\ln(2/\beta_S)})\leq \beta_S.
\end{align*}
\end{lemma}
\begin{proof}
Apply Corollary 12.3 of \citet{journals/fttcs/DworkR14} to $b_1=\dots=b_k=b$.
\end{proof}

\begin{restatable}{lemma}{UnionBound}
\label{applem:UnionBound}
For a random variable $X \sim D$, if $\Pr[|X|>\alpha]\le \beta$, then for $X_1, X_2, \ldots, X_k \sim D$ i.i.d., we have $\Pr[\max_i |X_i| > \alpha] \le k \cdot \beta$.
\end{restatable}

We use $f_X(x)$ to denote the probability density function of a continuous random variable $X$.
For our privacy proofs, we repeatedly use the fact that if $X$ and $Y$ are independent random variables with joint probability density function $f_{X,Y}(x, y)$, then $f_{X,Y}(x, y) = f_X(x) \cdot f_Y(y)$. Thus for any event $A(X, Y)$, we have
\[
\int_{x, y} \mathds{1}[A(x, y)] f_{X, Y}(x, y) dxdy
= \int_y \Pr_X[A(X, y)] f_Y(y) dy
\]

{
\newcommand{\Klt}{\ensuremath{\mathrm{Thresh}_\ell^t}}
\renewcommand{\Thresh}{D}
\renewcommand{\Ki}{\ensuremath{\mathrm{Thresh}_k}}
\section{Histogram Queries Parameterized in Maximum Query Output}\label{sec:hq}

\info{What is \\ mech1 for}
Mechanism~\ref{alg:histquery} is designed to answer $m$ monotone histogram queries with output-sensitive error on insertion-only streams.
\info{high-level \\ goal}
It consists of two main parts, a \emph{partitioning} mechanism %
and a black-box \emph{histogram} mechanism $H$. %
The goal of the partitioning mechanism is to sparsify
the input stream provided to the histogram mechanism $H$ by partitioning the input stream into \emph{intervals}. It batches consecutive inputs to Mechanism~\ref{alg:histquery} together into an interval, and combines these inputs into a single input to $H$.

In more detail, the algorithm keeps parameters $c_i$ and $s_i$ for each coordinate, which keep the current estimate of the column sum within the current interval, and the total column sum, respectively. On an input $x^t$, it updates these values (line~\ref{line:setcands}). It then tests if any of the queries on $s$ crosses a threshold (line~\ref{line:ifcross}). If not, it returns the output from the previous round. If it does, we insert the $c_i$ values into the blackbox histogram algorithm $H$ (line~\ref{line:insertH}). It then updates the thresholds, the parameters $s_i$, and the output (lines~\ref{line:updateThresh}-\ref{line:update_out}). Note that we keep a separate threshold for each query, and they are updated differently depending on whether or not the query answer was close to the threshold in this round (lines~\ref{line:threshold}-\ref{line:threshupd}).

\info{choice of \\ parameters}
The parameters and thresholds are chosen to minimize the additive error:
The longer the intervals, the smaller the error from the histogram mechanism (since it has fewer insertions), and the larger the error \emph{within} an interval (since the same output is used for all time steps within an interval). The parameters of the mechanism ($\Kit, D_j^t$, and $C_j^t$ for example) are chosen with the goal of balancing these two kinds of error.

\begin{algorithm}[!htb]
\begin{algorithmic}[1]
\State{\bf Input:}{ Stream $x^1, x^2, \ldots \in \{0,1\}^d$, an adaptively $\epsilon$-differentially private continual histogram mechanism $H$, failure probability $\beta$, additive error bound $\errgen{t, \beta}$ that holds with probability $\ge 1 - \beta$ for the output of $H$ at time step $t$.}
\State{\bf Output:}{ Estimate of $q_k(\histshort(t))$ for all $k\in [m]$ and all $t\in\mathbb N$}
\LComment{Initialization}
\State Initialize $H$
\State $\beta' = 6\beta/\pi^2$, $\beta_t = \beta'/t^2$ for any $t \in \mathbb N$\;
\State $\Kione \leftarrow 3\epsilon^{-1}(12\ln(2/\beta')+6\ln(6/\beta')+m\ln(6m/\beta'))+3\cdot\errgen{1,\beta'/6}$ for all $k\in [m]$
\State $c_i \leftarrow 0$ for all $i \in [d]$ \Comment{column sum within interval}
\State $s_i \leftarrow 0$ for all $i \in [d]$ \Comment{histogram estimate}
\State $j \leftarrow 1$ \Comment{number of intervals}
\State $\tau_1 \leftarrow$ \taurv
\State $\textrm{out} \leftarrow \left( q_1(\textbf{0}), q_2(\textbf{0}), \ldots, q_m(\textbf{0}) \right) $ \Comment{current output}
\LComment{Process the input stream}
\For{$t \in \mathbb{N}$}
    \State $c_i \leftarrow c_i + x_i^t$, $s_i \leftarrow s_i + x_i^t$ for all $i \in [d]$\label{line:setcands}
    \LComment{Set parameters}
    \State $\amut \leftarrow$ \amutval
    \State $\atauj \leftarrow$ \ataujval
    \State $\agammaj \leftarrow 3 \epsilon^{-1} m \ln (6m/\beta_j)$
    \State $\ahj \leftarrow$ \ahjval
    \State $C_j^t \leftarrow \amut + \atauj + \agammaj$, $\T_j^t \leftarrow 3 ( C_j^t + \ahj )$

    \LComment{Test for threshold}
    \State $\mu_t \leftarrow$ \murv \label{line:mu}
    \If{\label{line:ifcross}$\exists$ $k\in[m]:$ $q_k(s)+\mu_t>\Kit + \tau_j$}
        \LComment{Close the current interval}
        \State insert $(c_1,\dots,c_d)$ into $H$\label{line:insertH}
        \State reset $c_i \leftarrow 0$ for all $i \in [d]$
        \For{$k\in [m]$}
           \State $\gamma_k^j \leftarrow$ \gammarv\label{line:gamma}
             \LComment{if $q_k(s)$ is ``close'' to threshold, increase threshold}
            \If{${q_k}(s) + \gamma_k^j>\Kit-C_j^t$\label{line:threshold}}
               \State $\Kit \leftarrow \Kit +\T_j^t$\label{line:threshupd}
                \EndIf
        \EndFor
        \State$j \leftarrow j+1$
        \LComment{update threshold for the new interval}
        \State$\Kit \leftarrow \Kit - \T_{j-1}^t + \T_j^t\ \forall k \in [m]$ \label{line:updateThresh}
        \State$\tau_j \leftarrow $ \taurv\label{line:tau} \Comment{pick fresh noise}
      \State  $(s_1,\dots,s_d) \leftarrow$ output$(H)$\label{line:outputH}
       \State $\textrm{out} \leftarrow (q_1(s),\dots,q_m(s))$\label{line:update_out}
    \EndIf
    \State\textbf{output} out
  \State  $\Kitplusone \leftarrow \Kit - \T_{j}^t + \T_j^{t+1}$ $\forall\ k \in [m]$
\EndFor
\end{algorithmic}
\caption{Mechanism for answering $m$ histogram queries parameterized in the maximum query output.}
\label{alg:histquery}
\end{algorithm}

We want to point out two main differences in our approach compared to previous work, which are due to two difficulties: first, we compute non-linear queries on high-dimensional input data,
and second, we want to break the $\Omega(d \log T)$ barrier. We explain first the issues that arise and then how we overcome them.

We explain the first  difficulty for the query \minsum, but it applies correspondingly to other monotone histogram queries as well. When $d=1$ (continual counting as in \citet{conf/asiacrypt/DworkNRR15}), the partitioning mechanism does not depend on the output of the black-box counting/histogram mechanism. This is because the continual count over a stream is equal to the sum of the continual counts of all intervals. This does not hold for non-linear queries on higher dimensional inputs
because the input stream cannot be decomposed into intervals for queries like \minsum, i.e., the \minsum\ of the entire stream cannot be obtained from knowing just the \minsum\ value of each interval.
Instead, the partitioning algorithm requires an estimate of the current \minsum\ value at every time step.

Since we do not want the computation during the current interval to depend on the private data from prior intervals, we reuse the last output of $H$, as it is a privatized number, in order to keep a running estimate of \minsum.
This, however, leads to the following technical challenge:
The partitioning mechanism \emph{depends on the outputs of the histogram mechanism of prior intervals}, and the input to the histogram mechanism depends on the output of the partitioning mechanism, and, hence, on the prior output of the histogram mechanism.
Furthermore, given two neighboring streams to Mechanism~\ref{alg:histquery}, the input to the black-box histogram mechanism that is generated by the partitioning mechanism might not necessarily be neighboring streams (consider the case where two wildly different partitions are created on two neighboring streams, leading to inputs to the histogram mechanism that are very far apart). Thus, we cannot use a simple composition theorem to show privacy for the combined mechanism.

To overcome this difficulty, we use a continuous histogram mechanism that is differentially private \emph{even if the inputs are chosen adaptively}.
We note that \emph{concurrent} composition theorems as given in, e.g., \citet{DBLP:conf/ccs/HaneySTVVX023}, cannot be used here in a black-box manner, and explain the reasons in more detail in Appendix~E.
Adaptive differential privacy of the continuous histogram mechanism allows us to separate the privacy loss incurred by the partitioning mechanism from that of the histogram mechanism.

The second difficulty relates to the $\Omega(d \log T)$ barrier.
A na\"ive partitioning technique is to maintain independent thresholds for each query, and to use the sparse vector technique (SVT) separately for each query to check if the query answer is larger than its threshold. If a query answer is larger than the threshold, an interval is closed and the thresholds are increased accordingly.
Since this involves interacting with private data at \emph{each} time step for \emph{every} query, this approach incurs the $\Omega(d \log T)$ barrier of all prior work.

\info{partition mech}
To overcome this, we design a partitioning mechanism that works as follows: it checks first if \emph{there exists} a query that crosses a certain predefined threshold value (line~\ref{line:ifcross}). If not,
then the mechanism adds the current input to the batch and \emph{does not close} the current interval. The output that was used for the previous time step is reused.

\info{query crosses \\ threshold}
If there exists a query crossing the threshold, then the mechanism \emph{closes} the current interval, sends the batched input to $H$, and initializes a new interval. %
At this point, the mechanism has privately determined that \emph{there exists} a query that crosses the threshold. However, this information is not enough to update the thresholds, since we need to also need to privately determine the \emph{identities} of all the queries that cross the thresholds, which we do next.

\info{update \\ threshold}
Finally, the mechanism checks each query independently and privately if its threshold needs to be updated (line~\ref{line:threshold}). The parameters are chosen such that at least one query has its threshold updated at the end of each interval.

For the utility proof of \citet{conf/asiacrypt/DworkNRR15}, it was enough to show that their choice of threshold was larger than the standard deviation of their Laplace random variables. Due to the interplay between several submechanisms, our utility proof requires a more involved analysis.

\begin{theorem}
\label{thm:mech1}
Let $H$ be any $(\epsilon/3)$-dp continual histogram mechanism with $\eps > 0$, and let $q_1, \dots, q_m$ be any $m$ monotone histogram queries.
Mechanism~\ref{alg:histquery} satisfies $\epsilon$-dp. If $H$ is the mechanism from Fact~\ref{fact:counting}, then on input $x$, Mechanism~\ref{alg:histquery} has additive error
\[
\eaccvalue
\]
at all $t \in [T]$ simultaneously with probability $1-\beta$, where $\cmax = \max_{k \in [m]} \max_{t \in [T]} q_k(x^1, x^2, \dots, x^t)$.
\end{theorem}
 }

\section{Histogram Parameterized in the Number of Fluctuations}
\label{sec:histswitches}

\begin{algorithm}
\begin{algorithmic}[1]

\State {\bf Input: }{Stream $x^1,x^2,\ldots\in\{-1,0,1\}^d$, an $\epsilon/3$-differentially private continual histogram mechanism $H$, failure probability $\beta$, additive error bound $\err(t,\beta)$ that holds with probability $\geq 1-\beta$ for the output of $H$ at time step $t$.}
\State {\bf Output: }{Estimate $\histshort(t)$ at all $t\in \mathbb{N}$}
\LComment{Initialization of all parameters}
\State Initialize $H$\;
\State $\beta'=6\beta/\pi^2$, $\beta_t=\beta'/t^2$ for any $t\in\mathbb{N}$\;
\State $j\leftarrow 1$ \Comment{number of intervals}
\State $c_i\leftarrow 0$ for all $i \in [d]$ \Comment{column sum within interval}
\State $\mode_i\leftarrow 0$ for all $i\in[d]$
\State $t_{\diff}\leftarrow 0$ \Comment{length of current interval}
\State $\tau_1\leftarrow\Lap(9/\epsilon)$, $\tau_2\leftarrow\Lap(9/\epsilon)$\;

\State $H_{\out}=0^d$ \Comment{initial histogram}
\LComment{Process the input stream}
\For{$t\in\mathbb{N}$}
    \State $c_i\leftarrow c_i+x_i^t$ for all $i\in [d]$\label{line:updatec}\;
    \State $t_{\diff}\leftarrow t_{\diff}+1$\label{line:updatetdiff}\;
    \State $\alpha_t\leftarrow \frac{27}{\epsilon}\log (4/\beta_t)+\frac{3d}{\epsilon}{\log (1/\beta_j)}$\;
    \State $\Thresh_{i,1}\leftarrow\mode_i\cdot t_{\diff}-2\alpha_t$, $i\in[d]$\label{line:updatethresh1}\;
    \State $\Thresh_{i,2}\leftarrow\mode_i\cdot t_{\diff}+2\alpha_t$, $i\in[d]$\label{line:updatethresh2}\;
    \State $\mu_1^t\leftarrow\Lap(18/\epsilon)$\;
    \State $\mu_2^t\leftarrow\Lap(18/\epsilon)$\;
    \If{$\min_i(c_i-\Thresh_{i,1})<\tau_1 - \mu_1^t$}\label{cond:under_thresh_hist}
            \LComment{close the current interval}\label{line:close_int1}
            \State insert $(c_1,\dots,c_d)$ into $H$\label{line:input_hist2}
            \State $H_{\out} \gets \textrm{output}(H)$\label{line:output_hist2}\Comment{update histogram}
            \For{$i\in[d]$\label{line:update_modes_1}}
                \State $\lambda_i=\Lap(3d/\epsilon)$\;
                \LComment{update modes}
                \If{$c_i+\lambda_i<\Thresh_{i,1}+\alpha_t$\label{cond:update_mode_down}}
                    \State $\mode_i\leftarrow\max\{ \mode_i-1,-1 \}$\label{line:update_mode_down}
                \EndIf
            \EndFor
            \State reset $c_i\leftarrow 0$ for all $i\in[d]$\label{line:update_hist1}\;
            \State $j\leftarrow j+1$ \;
            \State $t_{\diff}\leftarrow 0$; $\tau_1\leftarrow\Lap(9/\epsilon)$
    \ElsIf{$\max_i(c_i-\Thresh_{i,2})>\tau_2-\mu_2^t$\label{cond:over_thresh_hist}}
            \LComment{close the current interval}\label{line:close_int2}
            \State insert $(c_1,\dots,c_d)$ into $H$
            \State $H_{\out} \gets \textrm{output}(H)$\Comment{update histogram}
            \For{$i\in[d]$}\label{line:update_modes_2}
                \State $\lambda_i=\Lap(3d/\epsilon)$
                \LComment{update modes}
                \If{$c_i+\lambda_i>\Thresh_{i,2}-\alpha_t$\label{cond:update_mode_up}}
                    \State $\mode_i\leftarrow\min\{ \mode_i+1,1 \}$\label{line:update_mode_up}
                \EndIf
            \EndFor
            \State reset $c_i\leftarrow 0$ for all $i\in[d]$\label{line:update_hist2}\;
            \State $j\leftarrow j+1$\;
            \State $t_{\diff}\leftarrow 0$;  $\tau_2\leftarrow\Lap(9/\epsilon)$\label{line:end_of_update}\EndIf
    \State \bf{output} $H_{\out}+\mode \cdot t_{\diff}$
\EndFor
\end{algorithmic}
\caption{Mechanism for \histogram\ parameterized in the number of fluctuations.}
\label{alg:few_switches_hist}
\end{algorithm}

Mechanism~\ref{alg:few_switches_hist} is designed to answer \histogram\ in the turnstile model with a better error bound when the number of times two consecutive rows differ is small, called the \emph{number of fluctuations}, $K$.
The high-level structure of the mechanism is similar to that of Mechanism~\ref{alg:histquery}, consisting of a partitioning mechanism that interacts with a black-box histogram mechanism $H$. The partitioning mechanism batches inputs together into an \emph{interval}, and sends the combined inputs in an interval as a single input to the histogram mechanism.

Earlier, the output of Mechanism~\ref{alg:histquery} in an interval was set to the previous output of $H$, and the output remained unchanged within the interval. %
In general, when balancing the privacy-accuracy trade-off by limiting the number of time steps for which the private data is accessed, the classic strategy (as by \citet{conf/asiacrypt/DworkNRR15}) is to not update the output at all between two updates to the black-box mechanism (here, the mechanism $H$).
We go beyond this paradigm with Mechanism~\ref{alg:few_switches_hist}, by modifying the estimate \emph{even within an interval}. In particular, our histogram estimate is the sum of the last output of $H$ and a function that is linear in the length of the interval. The function is chosen such that if the input stream remains \emph{stable}, i.e. the value does not change often, we do not need to end the current interval often (here, only $O(K)$ times).

Specifically, the output of Mechanism~\ref{alg:few_switches_hist} within an interval mimics the behavior of the previous batch of updates.
If the histogram was ``significantly'' increasing (or decreasing) for a coordinate during the previous interval, then the output of Mechanism~\ref{alg:few_switches_hist} for this coordinate in the current interval adds (or subtracts) an estimate to the last output of $H$ at each time step. This corresponds to guessing the \emph{first-order derivative} or gradient of each coordinate of the histogram
within the previous interval and then using this to vary the output of the mechanism for the current interval.
When the additive error of the estimate accumulated within an interval crosses a pre-specified threshold for at least one coordinate, the current interval is closed and the guess for the next interval is recomputed based on whether the prior guess overestimated (or underestimated) the true count for the current interval.

In detail, the variable $c_i$ tracks the true count for coordinate $i$ within the current interval, and $t_{\diff}$ is used to count the number of time steps in the interval.
Crucially, we introduce the variables $\mode_i$ for $i \in [d]$, which assume values in $\{-1, 0 ,1\}$ and are used to guess the slope for each coordinate. Each $\mode_i$ is initially 0. Upon an input $x^t$, we first update $c_i$ and $t_{\diff}$ (lines~\ref{line:updatec}-\ref{line:updatetdiff}) and the thresholds (lines~\ref{line:updatethresh1}-\ref{line:updatethresh2}). We use $\mode_i\cdot t_{\diff}$ as a guess for the count within the current interval for each coordinate $i$ -- that is, if $\mode_i=1$ (or $0$ or $-1$) we guess that coordinate $i$ consists in this interval only of $1$'s  (or $0$'s or $-1$'s). In each round we check if for some $i$, the guess is too large or too small compared to $c_i$ (lines~\ref{cond:under_thresh_hist} and \ref{cond:over_thresh_hist}). If not, we output the previous histogram output plus  $\mode_i\cdot t_{\diff}$ for each coordinate $i$. If for some $i$, the guess is too large, we insert the $c_i$ values into the blackbox histogram algorithm $H$ and update its output (lines~\ref{line:input_hist2}-\ref{line:output_hist2}). We then reduce the modes of all coordinates $i$ where $\mode_i\cdot t_{\diff}$ is too large compared to $c_i$ (line~\ref{line:update_mode_down}). If the guess is too small for some $i$, we perform analogous updates (lines~\ref{cond:over_thresh_hist}-\ref{line:end_of_update}).

For our parameterized error bound, we use the following novel analysis approach: we subdivide the input stream into maximal contiguous substreams during which the input does not change, called \emph{episodes}, creating at most $K+1$ episodes in the entire stream. Within each episode, we prove that at most 9 intervals are closed with high probability (whp) as follows: the increase of $c_i$ in one time step, called the \emph{slope of coordinate $i$}, is constant within an episode and belongs to $\{ -1, 0, 1 \}$. Thus, within an episode, all $d$ coordinates can be placed into three groups according to their slope. Within each \emph{interval}, the mechanism maintains one of three different modes for each coordinate. We first show that for all coordinates with an identical slope and an identical mode, their modes are updated at the same time step whp. As intervals are only closed when the mode of some coordinate is modified, it suffices to bound the number of time steps within an episode when a mode is updated.

Whenever the mode of a coordinate changes, it holds whp that the absolute difference of its mode and its slope is reduced by 1. Further, the mode will not be updated anymore in this episode when it matches the slope of the coordinate.
Thus, if the slope of a group is, say,~1, and there is a subgroup of coordinates of the group with mode $-1$  at the beginning of the episode, then there will be at most two time steps where the mode of this subgroup changes, namely first to 0 and then to 1. Counting all subgroups
and including the potential interval closure when the episode ends results in up to nine intervals closed within an episode whp.

Thus, taking first-order information into account admits improved bounds parameterized in the number of fluctuations while also handling \emph{negative} inputs, and the novel extension of SVT allows the first use of input partitioning techniques in the turnstile model.

\begin{theorem}
\label{thm:mech2}
Let $H$ be any $(\epsilon/3)$-dp continual histogram mechanism with $\eps > 0$.
Mechanism~\ref{alg:few_switches_hist} satisfies $\epsilon$-dp. If $H$ is the mechanism from Fact~\ref{fact:counting}, then on input $x$, Mechanism~\ref{alg:few_switches_hist} has additive error
$
O((d \log^2 (d K /\beta) + \log T)\eps^{-1})
$
at all $t \in [T]$ simultaneously with probability $1-\beta$, where $K=\sum_{t \in [T]} \mathds{1}(x_t \neq x_{t-1})$.
\end{theorem}

\section{Extensions}
\label{sec:extensions}

In this section, we present extensions and applications of the results shown earlier.

In the case of Mechanism~\ref{alg:histquery},
our bounds also extend to the setting when the entries are natural numbers ($\universe = \mathbb N$).
We produce our bounds for $(\epsilon,\delta)$-dp in Appendix~\ref{appsec:histqueryed}, where the linear dependencies on $d$ and $m$ are replaced by $\sqrt{d}$ and $\sqrt{m}$ respectively, achieving a similar improvement over prior work as for $\epsilon$-dp. In the standard definition, neighboring streams $x$ and $y$ may differ in the \emph{entire} input vector at one time step,~i.e.,~there is one time step $t^*$ such that $x^{t^*}$ and $y^{t^*}$ could differ in all $d$ coordinates. If $x^{t^*}$ and $y^{t^*}$ may only differ in up to $b<d$ coordinates, or, more generally, $||x^{t^*}-y^{t^*}||_1\leq b$, then the linear dependency on $d$ is replaced by the same dependency in $b$.

We summarize the results of %
Theorem~\ref{thm:intro} applied to some widely used query functions:

\begin{corollary}
Let $x=x^1, \ldots, x^T$ be an insertions-only stream as in Definition~\ref{def:cont_hist}.
Consider the queries \histogram, \maxsum, \sumselect, \topk, \median, \minsum.
Mechanism~\ref{alg:histquery} answers the query at all time steps $t$, is $\eps$-dp, and has an $\ell_{\infty}$-error of $O \left(\left(d \log^2 (d \rho /\beta) + \log T\right)\epsilon^{-1}\right)$ with probability $\ge 1-\beta$ simultaneously over all time steps, where the parameter $\rho$ is $n_{\min}$ for \minsum, $n_{\mathrm{median}}$ for \median, and $n_{\max}$ for the rest, where $n_{\min}$, $n_{\mathrm{median}}$, and $n_{\max}$ are the minimum, median, and maximum column sum.
\end{corollary}

\info{improvements \\ over \\ prior work}
As in Mechanism~\ref{alg:histquery}, the same technique replaces the $d$ term with a $\sqrt{d}$ term for $(\epsilon, \delta)$-dp for Mechanism~\ref{alg:few_switches_hist} as shown in Appendix~\ref{appsec:histqueryed}.
Similarly, if two neighboring streams may differ only in up to $b<d$ coordinates at one time step, then the linear dependency in $d$ gets replaced by $b$.
 
\section{Conclusion}

We have presented black-box reductions from continual counting to continual histogram via partitioning mechanisms, and any improvement to continual counting immediately leads to improvements for our parameterized mechanisms as well.

In addition to providing stronger upper bounds on sparse queries and streams, our results further provide the following insights into possible approaches for stronger lower bounds for these problems. In the case of continual counting, the current lower bound of $\Omega(\log T)$ is shown using sequences that only have \emph{two} switches. Thus, Algorithm~\ref{alg:few_switches_hist} (as well as the algorithm of \citet{conf/asiacrypt/DworkNRR15}) give an error of $O(\log T)$ on these sequences.
However, the latter algorithm only works for inputs in $\{0,1\}^T$, while ours works even for inputs from  $\{-1,0,1\}^T$. It follows that in order to show a stronger lower bound for continual counting (if it exists) a sequence with a large number of switches has to be used, even if the input is from $\{-1,0,1\}^T$.

In the case of continual histograms, we show that while existing algorithms must have an $\Omega(d \log T)$ error, there is hope of removing this $d \log T$ dependency. On the hard sequences which achieve the $\Omega(d \log T)$ lower bound for existing algorithms, Algorithm \ref{alg:few_switches_hist} achieves an error of $O(d\log^2 d+\log T)$ and Algorithm~\ref{alg:histquery} achieves an error of $O(d\log ^2 d + d \log \log T + \log T)$.

\section{Future Directions}
\label{sec:future}

A major open question is to close the $O(\log^2 T)$ vs $\Omega(\log T)$ gap for continual counting and $\epsilon$-differential privacy. Additionally, for continual histogram, it would be very interesting to see if there exists an $\epsilon$-dp algorithm
with $\tO(d + \log^2 T)$ error, separating the dependence of the multiple dimensions from the %
dependence on $\log T$ for \emph{all} streams - or, if a lower bound of $\Omega(d \log T)$ for \emph{all} algorithms for continual histogram exists.%

Next, many histogram settings contain streams drawn from a specific underlying probability distribution. One could imagine a histogram algorithm that \emph{learns} from its output history, and predicts future histogram values, only updating the histogram when the error of the prediction is too large. This would extend our parameterized results to the domain of \emph{learning-augmented private algorithms}, which could reduce the observed error by a large factor in practice. Often, observed data is highly structured and not adversarial, admitting much lower error bounds than for the adversarial case.

\subsection*{Acknowledgements}
\label{sec:ack}

MH: This project has received funding from the European Research Council (ERC) under the European Union's Horizon 2020 research and innovation programme (MoDynStruct, No. 101019564)  \includegraphics[width=0.9cm]{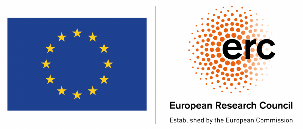} and the Austrian Science Fund (FWF) grant  \href{https://www.doi.org/10.55776/Z422}{DOI 10.55776/Z422}, grant  \href{https://www.doi.org/10.55776/I5982}{DOI 10.55776/I5982}, and grant  \href{https://www.doi.org/10.55776/P33775}{DOI 10.55776/P33775} with additional funding from the netidee SCIENCE Stiftung, 2020–2024. TAS: This work was supported by a research grant (VIL51463) from VILLUM FONDEN.

\bibliography{References}

\appendix

{
\newcommand{\Klt}{\ensuremath{\mathrm{Thresh}_\ell^t}}
\renewcommand{\Thresh}{D}
\renewcommand{\Ki}{\ensuremath{\mathrm{Thresh}_k}}
\section{Histogram Queries Parameterized in Maximum Query Output}\label{appsec:hq}

We gave an overview of how the mechanism works on an input stream in the main body. Here, we present the privacy and utility proofs of Mechanism~\ref{appalg:histquery}. We add the variable $p_j$ to the mechanism purely for the proof, to denote the end of the $j$-th interval. In particular, $[p_{j-1}, p_j)$ is the \emph{$j$-th interval}.

\begin{algorithm}[!htb]
\begin{algorithmic}[1]
\State{\bf Input:}{ Stream $x^1, x^2, \ldots \in \{0,1\}^d$, an adaptively $\epsilon$-differentially private continual histogram mechanism $H$, failure probability $\beta$, additive error bound $\errgen{t, \beta}$ that holds with probability $\ge 1 - \beta$ for the output of $H$ at time step $t$.}
\State{\bf Output:}{ Estimate of $q_k(\histshort(t))$ for all $k\in [m]$ and all $t\in\mathbb N$}
\LComment{Initialization}
\State Initialize $H$
\State $\beta' = 6\beta/\pi^2$, $\beta_t = \beta'/t^2$ for any $t \in \mathbb N$\;
\State $\Kione \leftarrow 3\epsilon^{-1}(12\ln(2/\beta')+6\ln(6/\beta')+m\ln(6m/\beta'))+3\cdot\errgen{1,\beta'/6}$ for all $k\in [m]$
\State $c_i \leftarrow 0$ for all $i \in [d]$ \Comment{column sum within interval}
\State $s_i \leftarrow 0$ for all $i \in [d]$ \Comment{histogram estimate}
\State $j \leftarrow 1$ \Comment{number of intervals}
\State $p_0 \gets 0$
\State $\tau_1 \leftarrow$ \taurv
\State $\textrm{out} \leftarrow \left( q_1(\textbf{0}), q_2(\textbf{0}), \ldots, q_m(\textbf{0}) \right) $ \Comment{current output}
\LComment{Process the input stream}
\For{$t \in \mathbb{N}$}
    \State $c_i \leftarrow c_i + x_i^t$, $s_i \leftarrow s_i + x_i^t$ for all $i \in [d]$
    \LComment{Set parameters}
    \State $\amut \leftarrow$ \amutval
    \State $\atauj \leftarrow$ \ataujval
    \State $\agammaj \leftarrow 3 \epsilon^{-1} m \ln (6m/\beta_j)$
    \State $\ahj \leftarrow$ \ahjval
    \State $C_j^t \leftarrow \amut + \atauj + \agammaj$, $\T_j^t \leftarrow 3 ( C_j^t + \ahj )$

    \LComment{Test for threshold}
    \State $\mu_t \leftarrow$ \murv \label{appline:mu}
    \If{\label{appline:ifcross}$\exists$ $k\in[m]:$ $q_k(s)+\mu_t>\Kit + \tau_j$}
        \State $p_j \gets t$\Comment{Close the current interval}
        \State insert $(c_1,\dots,c_d)$ into $H$
        \State reset $c_i \leftarrow 0$ for all $i \in [d]$
        \For{$k\in [m]$}
           \State $\gamma_k^j \leftarrow$ \gammarv\label{appline:gamma}
             \LComment{if $q_k(s)$ is ``close'' to threshold, increase threshold}
            \If{${q_k}(s) + \gamma_k^j>\Kit-C_j^t$\label{appline:threshold}}
               \State $\Kit \leftarrow \Kit +\T_j^t$\label{appline:threshupd}
                \EndIf
        \EndFor
        \State$j \leftarrow j+1$
        \State$\Kit \leftarrow \Kit - \T_{j-1}^t + \T_j^t$ for all $k \in [m]$  \Comment{update threshold for the new interval}
        \State$\tau_j \leftarrow $ \taurv\label{appline:tau} \Comment{pick fresh noise}
      \State  $(s_1,\dots,s_d) \leftarrow$ output$(H)$\label{appline:updates}
       \State $\textrm{out} \leftarrow (q_1(s),\dots,q_m(s))$
    \EndIf
    \State\textbf{output} out
  \State  $\Kitplusone \leftarrow \Kit - \T_{j}^t + \T_j^{t+1}$ $\forall\ k \in [m]$
\EndFor
\State $p_j \gets T$
\end{algorithmic}
\caption{Mechanism for answering $m$ histogram queries parameterized in the maximum query output.}
\label{appalg:histquery}
\end{algorithm}

\subsection{Privacy}
Recall that the main technical challenge to prove privacy of Mechanism \ref{appalg:histquery} is the following: The partitioning mechanism (which decides when to close an interval) \emph{depends on the outputs of the histogram mechanism for prior intervals} (unlike in \cite{conf/asiacrypt/DworkNRR15}, where the partitioning was independent of the output of the counting mechanism), and the input to the histogram mechanism depend on the output of the partitioning mechanism, and, hence, on the prior output of the histogram mechanism.
Furthermore, given two neighboring streams, the input to the histogram mechanism might not necessarily be neighboring, since the input to the histogram depends on the partitioning (consider the case where two wildly different partitions are used on two neighboring streams, leading to inputs to the histogram mechanism that are very far apart). Thus, we cannot use a simple composition theorem to show privacy for the combined mechanism.
To overcome this difficulty, we use a continuous histogram mechanism that is differentially private \emph{even if the inputs are chosen adaptively}. We then perform a careful privacy analysis to show that the interaction between the adaptively differentially private continuous histogram mechanism and the partitioning mechanism satisfies privacy. The fact that the continuous histogram mechanism is adaptively differentially private allows us to separate the privacy loss incurred by the partitioning mechanism from that of the histogram mechanism in the analysis.

\begin{lemma}
\label{applem:topkprivacy}
Let $\epsilon>0$. If $H$ is an $(\epsilon/3)$-differentially private continual histogram mechanism, then
Mechanism~\ref{appalg:histquery} satisfies $\epsilon$-differential privacy. This holds independent of the initial setting of $(s_1,\dots, s_d)$, $\Kit$, $\Thresh_j^t$, and $C_j^t$s.
\end{lemma}
\begin{proof}
Let $x$ and $y$ be two neighboring streams that differ at time $t^*$. Notice that the outputs of Mechanism~\ref{appalg:histquery} at any time step are a post-processing of
the interval partitioning and the outputs $(s_1, \dots, s_d)$ of the histogram mechanism $H$ for each interval.
Thus, to argue privacy,
we consider a mechanism
$\alg{x}$ which outputs
the interval partitions and outputs of $H$ for each interval with input stream $x$.
Let $S$ be any subset of possible outputs of $\alg$. We show that
\[
\Pr\left[ \alg{x} \in S \right]
\le e^{\epsilon} \cdot \Pr\left[ \alg{y} \in S \right]
\]
The arguments also hold when swapping the identities of $x$ and $y$ since they are symmetric, which gives us the privacy guarantee. Thus we focus on proving the inequality above.

\begin{algorithm}[!htp]
\makeatletter
\renewcommand*{\ALG@name}{Game}
\makeatother
\begin{algorithmic}[1]
\State{\textbf{Input: }}{Streams $x=x^1, x^2, \ldots, x^T \in \{0,1\}^d$ and $y=y^1, y^2, \ldots, y^T \in \{0,1\}^d$ such that $x$ and $y$ are neighboring and differ in time $t^*$, initial values $s_1,\dots,s_d$, a stream of values $\T_1,\T_2,\dots,$ a stream of values $C_1,C_2,\dots$}

\State $p_0 \leftarrow 0$, $j \leftarrow 1$

\State $c^x_i = 0$ and $c^y_i = 0$ for all $i \in [d]$

\State ChallengeOver = False

\State $\tau\leftarrow \Lap(6/\epsilon)$

\State $\widetilde{\T}_{(k)} \leftarrow \T_1 + \tau $ for all $k\in [m]$

\For{$t \in [T]$}
    \State $c^x_i = c^x_i + x_i^t$, $s_i = s_i + x_i^t$ for all $i \in [d]$
    \State $c^y_i = c^y_i + y_i^t$ for all $i \in [d]$

    \State $\mu=\Lap(12/\epsilon)$\label{appline:advnoisemutopk}
    \If{\label{appline:advKnownMaxIftopk}$\exists\ k\in[m]$: $q_k(s)+\mu>\widetilde{\T}_{(k)}$}
        \State $p_j \leftarrow t$

        \If{$p_j\geq t^{*}$ and \textnormal{ChallengeOver}=\textnormal{False}}
            \State $\type_j=\texttt{challenge}$
            \State {\bf output} $(c^x,c^y)$
            \State ChallengeOver=True
        \Else
            \State $\type_j=\texttt{regular}$
            \State {\bf output} $c^x$
        \EndIf

        \For{$k\in [m]$}
            \State $\tilde{q}_k(s)\leftarrow q_k(s)+\Lap(3m/\epsilon)$\label{appline:advnoisysitopk}
            \If{$\tilde{q}_k(s)>\T_{(k)}-C_j$\label{appline:advKnownMaxThreshtopk}}
                \State $\T_{(k)}\leftarrow\T_{(k)}+\T_j$\label{appline:update_thresh}
                \State $j \leftarrow j+1$
            \EndIf
        \EndFor

        \State $\tau \gets \Lap(6/\epsilon)$\;
        \State $\T_{(k)}\leftarrow\T_{(k)}-\T_{j-1}+\T_{j}$ for all $k \in [m]$
        \State $\widetilde{\T}_{(k)} \leftarrow \T_{(k)} + \tau$ for all $k \in [m]$
        \State reset $c^x_i \leftarrow 0$, $c^y_i \leftarrow 0$ for all $i \in [d]$
        \State receive $(s_1,\dots,s_d) \leftarrow H$\label{appline:AdvCMAssigntopk}
    \EndIf
\EndFor
\State $p_j \leftarrow T$

\caption{Privacy game $\Pi_{H,Adv(x,y)}$ for the adaptive continual release model and $m$ queries for histogram mechanism $H$}
\label{appalg:metaadversaryk}
\end{algorithmic}
\end{algorithm}

We first argue that Mechanism~\ref{appalg:histquery} acts like an adversarial process in the adaptive continual release model towards the histogram mechanism $H$. From our assumption on $H$ it then follows that the output of $H$ is $\epsilon/3$-differentially private. We will combine this fact with an analysis of the modified sparse vector technique (which determines when to close an interval) plus the properties of the Laplace mechanism (which determines when a threshold is updated) to argue that the combined mechanism consisting of the partitioning and the histogram mechanism is $\epsilon$-differentially private.

Recall that an adversary in the adaptive continual release model presented in Section~\ref{sec:prelim} is given by a privacy game, whose generic form is presented in Game~\ref{appalg:adaptivemodel}.
Due to the complicated interaction between the partitioning and $H$, the specification of such an adversarial process in our setting is given in Game~\ref{appalg:metaadversaryk}.
Call $[p_{\ell-1}, p_\ell)$ the \emph{$\ell$-th interval}.
The basic idea is as follows: Let $t^*$ be the time step at which $x$ and $y$ differ.
Conditioned on identical choices for the random variables before time step $t^*$, we have that all the intervals that the mechanism creates and also the values that the mechanism (in its role as an adversary) gives to the histogram mechanism, are identical for $x$ and $y$ before time step $t^*$.
These are regular time steps in the game.
The value for the first interval ending at or after time $t^*$ can differ and constitutes the challenge step.
All remaining intervals lead to regular steps in Game~\ref{appalg:metaadversaryk}.

Note that the end of the intervals, i.e., the partitioning of the stream, is computed by the adversary. This partitioning is based on the ``noisy'' histogram (the $s_i$ values), which are computed from the output of $H$ (which can depend on $x$ and $y$, depending on $\side$) and the values of the input stream $x$ in the current interval - for \emph{either value} of $\side$, since the adversary does not know $\side$.
We denote the adversary with input streams $x$ and $y$ by $Adv(x,y)$,
and the corresponding game, Game $\Pi_{H,Adv(x,y)}$.
Our discussion above implies that $Adv(x,y)$ does not equal $Adv(y,x)$.

The important observation from this game is that there is only one interval, i.e., only one time step for $H$, where the adversary outputs two values, and in all other time steps it outputs only one value. Also, at the challenge time step where it sends two values $c^x$ and $c^y$, these values differ by at most 1. Thus the adversarial process that models the interaction between the partitioning mechanism and $H$ fulfills the condition of the adaptive continual release model. As we assume that $H$ is $\epsilon/3$-differentially private in that model it follows that for all possible neighboring input streams $x$ and $y$ for $\Pi_{H,Adv(x,y)}$ and all possible sides $L$ and $R$ it holds that
\begin{align*}
\Pr(V_{H,Adv(x,y)}^{(L)}\in S)\leq e^{\epsilon/3}\Pr(V_{H,Adv(x,y)}^{(R)}\in S)
\end{align*}
where we use the definition of a view $V_{H,Adv(x,y)}^{(L)}$ and $V_{H,Adv(x,y)}^{(R)}$ from Definition \ref{appdef:adaptive}.
The same also holds with the positions of $x$ and $y$ switched and for $L$ and $R$ switched. Since the choice of $L/R$ merely decides whether the counts $c^x$ or $c^y$ are sent by the game to $H$, we abuse notation and specify directly which count is sent to $H$, as $V_{H,Adv(x,y)}^{(x)}$ or $V_{H,Adv(x,y)}^{(y)}$.

Recall that the view of the adversary in Game $\Pi_{H,Adv(x,y)}$ consists of its internal randomness as well as its outputs and the output of $H$ for the whole game, i.e., at the end of the game.
The behavior of $Adv(x,y)$ is completely determined by its inputs consisting of $x$, $y$, the outputs of $H$, the thresholds $\T_j^t$ and the values $C_j^t$, as well as by the functions $q_k$ and the random coin flips.
However, for the privacy analysis only
the partitioning and the output of $H$
matter since the output of Mechanism \ref{appalg:histquery} only depends on those.
Thus, we ignore the other values in the view and say that a view $V$ of the adversary $Adv(x,y)$ in Game $\Pi_{H,Adv(x,y)}$ satisfies $V\in S$,
if the partitioning and the streams of $(s_1,\dots,s_d)$ returned from $H$ for all intervals match the output sequences in $S$.
Let $C_j^t$ and $\T_j^t$ be as in the mechanism. Assume Game $\Pi_{H,Adv(x,y)}$ is run with those settings of $C_j^t$ and $\T_j^t$. By the definition of $\Pi_{H,Adv(x,y)}$, we have
\begin{align*}
\Pr(\mathcal{A}(x)\in S)&=\Pr(V_{H,Adv(x,y)}^{(x)}\in S),
\text{ and }\\
\Pr(\mathcal{A}(y)\in S)&=\Pr(V_{H,Adv(y,x)}^{(y)}\in S).
\end{align*}
We will prove below that
\begin{align}\label{appeq:viewsxyswitchedK}
\Pr(V_{H,Adv(x,y)}^{(x)}\in S)\leq e^{2\epsilon/3}\Pr(V_{H,Adv(y,x)}^{(x)}\in S).
\end{align}
Privacy then follows, since
\begin{align}\begin{split}\label{appeq:fullprivacyK}
\Pr(\mathcal{A}(x)\in S)&=\Pr(V_{H,Adv(x,y)}^{(x)}\in S)\\
\leq e^{2\epsilon/3}\Pr(V_{H,Adv(y,x)}^{(x)}\in S)
&\leq e^{\epsilon}\Pr(V_{H,Adv(y,x)}^{(y)}\in S)\\
&=e^{\epsilon}\Pr(\mathcal{A}(y)\in S).
\end{split}
\end{align}

We now prove (\ref{appeq:viewsxyswitchedK}).
Recall that when we run $Adv(x,y)$ on side $x$, the interval partitioning is created according to $x$ and the outputs of $H$. Also for each interval, the input given to $H$ is based on the counts for $x$, as we consider side $x$. When we run $Adv(y,x)$ on side $x$, then the interval partitioning is created according to $y$ and for each interval we give the counts for $x$ as input to $H$. Thus in both cases the input given to $H$ is based on the counts for $x$, and hence, to prove inequality \ref{appeq:viewsxyswitchedK}, it suffices to show that \emph{when running $Adv(x,y)$ on side $x$ and $Adv(y,x)$ on side $x$, the probabilities of getting a given partition and thresholds are $e^{2\epsilon/3}$-close.} To simplify notation, we denote running $Adv(x,y)$ on side $x$ as $\mathrm{run}(x)$, and $Adv(y,x)$ on side $x$ as $\mathrm{run}(y)$.

Recall that $[p_{\ell-1}, p_\ell)$ is the $\ell^{th}$ \emph{interval}. Denote the interval that $t^*$ belongs to as the $j$-th interval.
Note that the probabilities of computing any fixed sequence of intervals $[p_0,p_1),\dots,$ $[p_{j-2},p_{j-1})$ with $p_{j-1}<t^*$ are the same on both $\mathrm{run}(x)$ and $\mathrm{run}(y)$, since the streams are equal at all time steps before $t^*$.

We want to argue two things:
(A) fixing a particular time $\lambda>p_{j-1}$, the probability of $p_j=\lambda$ is $e^{\epsilon/3}$-close on $\mathrm{run}(x)$ and $\mathrm{run}(y)$; and
(B) the probabilities of updating the thresholds, i.e., executing line \ref{appline:update_thresh} in Game~\ref{appalg:metaadversaryk} at time $p_j$ for any subset of $[d]$, is $e^{\epsilon/3}$-close on $\mathrm{run}(x)$ and $\mathrm{run}(y)$.
Then we show that this implies that (C) all the thresholds $\Kit$ maintained by adversary are the same at the end of the interval.

The proof of (A) is similar to the privacy of the sparse vector technique (see e.g. \cite{journals/pvldb/LyuSL17}); (B) holds by a post-processing of the Laplace mechanism; and (C) follows by carefully analyzing the sequences of events and their dependencies. Before we prove these statements,
(A), (B) and (C) together imply that the probabilities that the $j$-th interval ends at the same time \emph{and} that the thresholds are updated in the same way in all intervals in run ($x$) and run($y$) are $e^{2\epsilon/3}$-close.
This implies that the probabilities $\Pr(V_{H,Adv(x,y)}^{(x)}\in S)$ and $\Pr(V_{H,Adv(y,x)}^{(x)}\in S)$ are $e^{2\epsilon/3}$-close for any subset $S$ of possible outputs. Thus, (\ref{appeq:viewsxyswitchedK}) and therefore (\ref{appeq:fullprivacyK}) follow, completing the proof.

(A) Fixing a particular time $\lambda > p_{j-1}$, we first show that the probability of interval $j$ ending at $\lambda$ (i.e., $p_j = \lambda$) is  $e^{\epsilon/3}$-close on $\mathrm{run}(x)$ and $\mathrm{run}(y)$. Fixing some notation, let $\mu_t \sim \Lap(12/\epsilon)$ and $\tau_j \sim \Lap(6/\epsilon)$ be as in the mechanism,
let $s^{t}(x)$ denote the vector of $(s_i)_{i\in[d]}$ at time $t$ for stream $x$, and $f_X$ denote the density function of the random variable $X$. For the interval $j$ to close at time $\lambda$ on $\mathrm{run}(x)$, there must exist a $k\in [m]$ with $q_k(s^{\lambda}(x)) + \mu_\lambda > \Kit + \tau_j$ at time $\lambda$, and $q_{\ell}(s^{t}(x)) + \mu_t \leq \Klt + \tau_j$ for all $p_{j-1}<t<\lambda$ and $\ell\in[m]$.

Note that conditioning on all the random variables being the same on $x$ and $y$ before $p_{j-1}$, we have that any $s_\ell$ at time $t\leq p_j$ can differ by at most 1 on $x$ and $y$. Therefore $q_\ell(s^{t}(x))$ and $q_\ell(s^{t}(y))$ can also differ by at most 1 by sensitivity of $q_\ell$. Therefore, for $p_{j-1}<t<
\lambda$, any $\ell\in[m]$ and any fixed value $z\in \mathbb{R}$ that $\tau_j$ can take, we have
\begin{align*}
    &\Pr[q_{\ell}(s^{t}(x)) + \mu_t \leq \Klt + z]\\\leq&\Pr[q_{\ell}(s^{t}(y)) + \mu_t \leq \Klt + z+1]
\end{align*}
Also, for fixed $z\in \mathbb{R}$ (resp.~$c\in \mathbb{R}$) that $\tau_j$ (resp.~$\mu_{\lambda}$) can take,
\begin{align*}
    &\Pr[q_k(s^{\lambda}(x)) + c > \Kit + z]\\\leq&\Pr[q_k(s^{\lambda}(y)) + c +2 > \Kit + z+1].
\end{align*}
Now, since $\tau_j \sim \Lap(6/\epsilon)$, we have $f_{\tau_j}(z)\leq e^{\epsilon/6}f_{\tau_j}(z+1)$. Similarly, since $\mu_\lambda\sim \Lap(12/\epsilon)$, we have $f_{\mu_{\lambda}}(c)\leq e^{2\epsilon/12}f_{\mu_{\lambda}}(c+2)=e^{\epsilon/6}f_{\mu_{\lambda}}(c)$. Now, integrating over the distributions of $\tau_j$ and $\mu_{\lambda}$ and using these properties gives $\Pr[p_j=\lambda\textnormal{ on } x]\leq e^{\epsilon/3}\Pr[p_j=\lambda \textnormal{ on } y]$.
We conclude that the probability of $p_j=\lambda$ is $e^{\epsilon/3}$-close on run($x$) and run($y$).

(B) Next, conditioned on all previous outputs of $H$ being the same and $p_j$ being equal, we argue that the probabilities of updating any subset of thresholds are close for both runs at time $p_j$.
Note that when they are updated at the same time, they are updated in the same way.
Since $q_k(s^{p_j}(x))$ and $q_k(s^{p_j}(y))$ can differ by at most $1$ for each $k\in[m]$,
adding $\gamma_k^j \sim \Lap(3m/\epsilon)$ to every
${q_k}(s^{p_j}(y))$ in line~\ref{appline:advKnownMaxThreshtopk}
ensures that the distributions of ${q_k}(s^{p_j}(x)) + \gamma_k^j$ and ${q_k}(s^{p_j}(y)) + \gamma_k^j$ are $e^{\epsilon/3}$-close for all $k\in[m]$ by composition. Since the condition in line \ref{appline:update_thresh} only depends on those, this implies that the probabilities of updating the threshold (i.e., executing line \ref{appline:update_thresh}) on any subset of $[m]$ on $\mathrm{run}(x)$ and $\mathrm{run}(y)$ are $e^{\epsilon/3}$-close.

(C)
\emph{Up to interval $j-1$:} We already argued in (A) that conditioned on all random variables being the same on $x$ and $y$ before interval $j$, the executions of run($x$) and run($y$) are identical and, thus, all thresholds are updated in the same way.
\emph{Interval $j$ and up:}
For any $\ell \ge j$ denote by $E_{\ell}$ the event that for run($x$) and run($y$), all the intervals until interval $\ell$ end at the same time step, all the thresholds $\Kit$ for $t \le p_{\ell}$ are identical, and the random variables used after time $p_{\ell}$ take the same values on both runs.
We will next argue that conditioned on event $E_{\ell}$,
event $E_{\ell + 1}$ holds.
Note that event $E_j$ holds by (B),
and by definition, run($x$) and run($y$) both use the counts from stream $x$ to compute the input for $H$.
Inductively assume that event $E_\ell$ holds.
Event $E_\ell$ implies that all intervals $\le \ell$ were closed at the same time on both runs and hence the same counts were given as input to $H$. Since (a) the streams $x$ and $y$ are identical for all $t > p_{\ell}$, (b) the thresholds and the outputs of $H$ are identical at the end of interval $\ell$, and (c) the random variables used after $p_{\ell}$ are identical (which follows from event $E_\ell$), we have that the $\ell+1$-st interval ends at the same time on both runs, and that the same thresholds are updated, and by the same amount at time $p_{\ell+1}$. This shows that event $E_{\ell+1}$ holds, as required.
\end{proof}

\subsection{Accuracy}
\label{appsec:acc}

After processing the input at time step $t$, let $\histshort^t$ be the actual histogram, $s^t$ be the value of $s$ stored by \accalg, and ${ \cmax }^t$ be the maximum query value. Suppose $t$ belongs to interval $j$, i.e., $t \in \jinterval$. Since the mechanism outputs $q_k(s^{p_{j-1}})$ at time $t$, our goal is to bound the additive error $|q_k(\histshort^t) - q_k(s^{p_{j-1}})|$ at all times $t \in \mathbb N$ and for all queries $k \in [m]$.
We do this as follows:

\begin{enumerate}
    \item Use Laplace concentration bounds to bound the maximum value attained by the random variables used by the mechanism (\bounds).
    \item Show that if query \threshcross{$k$} $\Kit$, then $q_k$ on the true histogram is not too much smaller than the threshold (Lemma~\ref{applem:accgiLB}).
    \item Show that if query \threshcross{$k$} $\Kit$, then $q_k$ on the true histogram is not too much larger than the threshold (Lemma~\ref{applem:accgiUB}).
    \item Bound the number of intervals produced by the mechanism (Lemma~\ref{applem:accnumintervals}).
    \item Use all the above to bound the error of the mechanism (Lemma~\ref{applem:acccases}).
\end{enumerate}

We define the random variables (RVs) $\mu_t, \tau_j, \gamma_k^j$ as in the mechanism.
The variables $\amut, \atauj, \agammaj$ used in the mechanism are defined such that they bound simultaneously with good probability ($\ge 1 - \beta$) the corresponding RVs.
In the rest of the section, we condition that the bounds hold on the random variables used in the mechanism.

\begin{restatable}[RV Bounds]{lemma}{accconc}
\label{applem:accconc}
There exists a histogram mechanism $H$ such that the following bounds hold simultaneously with probability $\ge 1 - \beta$ for all $t, j \in \mathbb N$ and $k \in [m]$
\begin{align*}
|\mu_t| &\le \amut, \qquad
|\tau_j| \le \atauj, \qquad
|\gamma_k^j| \le \agammaj, \qquad
\max_{t \in \jinterval} \|s^t - h^t\|_{\infty} \le \ahj \quad \forall t \in \jinterval \\
\text{where} \qquad \amut &= \amutval, \qquad
\atauj = \ataujval, \qquad
\agammaj = \agammajval, \\
\ahj &= \unbddaccj
\end{align*}
\end{restatable}

From the final bound above, we get the following lemma which bounds the error of the query values when computed on the noisy histogram $s$ stored by the mechanism.

\begin{restatable}{lemma}{acchsclose}
\label{applem:acchsclose}
\assumeconc Let $t \in [T]$ be any time step, and suppose $t \in \jinterval$. Then for all $k \in [m]$,
\[
    |q_k(s^t) - q_k(h^t)| \le \ahj.
\]
\end{restatable}

Since our output at time $t$ is $q_k(s^{p_{j-1}})$, our error is $|q_k(h^t) - q_k(s^{p_{j-1}})|$, which we bound as follows:
\begin{align*}
|q_k(h^t) - q_k(s^{p_{j-1}})|
&\le |q_k(h^t) - q_k(h^{p_{j-1}})| + |q_k(h^{p_{j-1}}) - q_k(s^{p_{j-1}})| \\
&\le |q_k(h^t) - q_k(h^{p_{j-1}})| + \ahj \tag*{(by Lemma~\ref{applem:acchsclose})} \\
&\le q_k(h^t) - q_k(h^{p_{j-1}}) + \ahj, \tag*{(since $q_k$ and $h$ are monotone and $t \ge p_{j-1}$)}
\end{align*}
our accuracy bound reduces to giving an upper bound on $q_k(h^t)$ and a lower bound on $q_k(h^{p_{j-1}})$.

We say \emph{\threshcross{$k$} at time $t$} if line \ref{appline:threshupd} of the mechanism is executed for $k$ at time $t$. Note that then $t=p_j$ for some $j$.
Our lower bound on $q_k(h^{p_j})$ will be based on the fact that \threshcross{$k$} at time $p_j$.
At time steps where \notthreshcross{$k$}, our upper bound on $q_k(h^t)$ will follow from a complementary argument to the above lower bound.

For an upper bound on $q_k(h^{p_j})$ at time steps when \threshcross{$k$}, we first show that \notthreshcross{$k$} at time $p_j - 1$ as follows: Let $\plast<\pj$ be the last time step before $p_j$ when \threshcross{$k$}, and never in between $\plast$ and $\pj$.
Then by definition of the mechanism, $\Kipj-\Kiplast=\T_j^{\pj}$. We use this to show that
$q_k$ must have increased by more than 1 between $\plast$ and $\pj$. The latter fact implies two things: first, that $j\leq \kcmax$; second, that \notthreshcross{$k$} at time $\pj-1$. The latter can be used to get an upper bound on $q_k(h^{\pj-1})$ and, by the 1-sensitivity of $q_k$, also on $q_k(h^{\pj})$.
For the first interval, there does not exist any such $\plast$ where the threshold was crossed previously. For this, we prove an auxiliary lemma that says that $p_1 > 1$, and hence no threshold was crossed at time $p_1 - 1$, and the rest of the analysis follows.

Combining the two gives an upper bound on
$q_k(h^{t}) - q_k(h^{p_{j-1}})$ of $O(\T_j^t + \amut + \atauj + \agammaj)$,
which is the crucial bound needed to upper bound $|q_k(h^{t}) - q_k(s^{p_{j-1}})|$.

Our first lemma shows that whenever \threshcross{$k$}, the query value on the true histogram is not too small compared to the threshold.

\begin{restatable}[lower bound]{lemma}{accgiLB}
\label{applem:accgiLB}
\assumeconc Let $k \in [m]$ and suppose \threshcross{$k$} at time $t=p_j$.
\[
q_k(\histshort^{p_j})\geq \Kipj- \left( \amu{p_j} + \atauj + 2\agammaj + \ahj \right).
\]
\end{restatable}

Using the strategy mentioned above, we then prove that the query value on the true histogram is never too large compared to the threshold.
Along the way, we also show that every time \threshcross{$k$}, the query value on the true histogram must increase.

\begin{restatable}[upper bound]{lemma}{accgiUB}
\label{applem:accgiUB}
\assumeconc Let $k \in [m]$ and $t \in \mathbb N$.
\[
q_k(h^{t}) < \Kit + \left( \amut + \atauj + \agammaj + \ahj + 1\right).
\]
Further, suppose \threshcross{$k$} at time $t = p_j$. Then denoting by $\plast$ the last time before $p_j$ that $k$ crossed a threshold, it also holds that
$p_j - \plast > 1$ and $|q_k(h^{p_j}) - q_k(h^{\plast})| > 1$.
\end{restatable}

We use the second part of the above lemma to bound the number of intervals created by the mechanism, where ${ \cmax }^t$ is the maximum query output at time $t$.

\begin{restatable}{lemma}{accnumintervals}
\label{applem:accnumintervals}
\assumeconc \accalg\ creates at most $\kcmaxt$ many segments upto time $t$.
\end{restatable}

\begin{restatable}{lemma}{acccases}
\label{applem:acccases}
\assumeconc Let $t \in \mathbb N$ be any time step, and suppose $t \in [p_{j-1}, p_j)$. Then \accalg\ is $\alphat_j$-accurate at time $t$, where
\[
\alphat_j = O\left( \amut + \atauj + \agammaj + \ahj \right)
\]
In particular, for all $t \in \mathbb N$, \accalg\ is $\alphat$-accurate, where
\[
\alphat = O\left( \amut + \atau{\kcmaxt} + \agamma{\kcmaxt} + \ah{\kcmaxt} \right)
.\]
\end{restatable}

The accuracy proof for
Mechanism~\ref{appalg:histquery}
then follows since we show that \bounds\ holds for the corresponding values in the mechanism, and plugging them into the above lemma.

\begin{restatable}{corollary}{eacc}
\label{appcor:eacc}
Mechanism~\ref{appalg:histquery} with
a histogram mechanism with error $\err(t, \beta)$ has error at most $$\alphat = O\left( \frac{1}{\epsilon} ( d\, \err(\kcmaxt, \beta/(\kcmaxt)^2) + m \log (\kcmaxt/\beta) + \log t \right) $$ at all time steps $t$ simultaneously with probability at least $1- \beta$. In particular, using
the histogram mechanism from Fact~\ref{fact:counting} has error at most $$\alphat = \eaccvalue$$ at all time steps $t$ simultaneously with probability at least $1 - \beta$.
\end{restatable}

\subsection{Accuracy Proofs}

\accconc*
\begin{proof}
Using \lapbound\ gives us the first three bounds below:
\begin{enumerate}
    \item $\mu_t \sim \Lap(12/\epsilon)$ satisfies $ |\mu_t| < \amutval$ with probability $\ge 1 - \beta_t/2$.
    \item $\tau_j \sim \Lap(6/\epsilon)$ satisfies $ |\tau_j| < \ataujval$ with probability $\ge 1 - \beta_j/6$.
    \item $\gamma_k^j \sim \Lap(3m/\epsilon)$ satisfies $|\gamma_k^j| \le \agammajval$ for all $k \in [m]$ simultaneously with probability $\ge 1 - \beta_j/6$.
    \item By assumption, the output of $H$ at time $p_{j-1}$ has additive error at most $\errgen{j, \beta_j/6}$ with probability at least $1 - \beta_j/6$. In particular, the histogram mechanism from Fact~\ref{fact:counting} guarantees $\errgen{j, \beta} = \unbddaccj$.
\end{enumerate}
By a union bound, all the four bounds hold at every time step with probability at least $1 - \sum_{t = 1}^{\infty} \beta_t/2 - \sum_{j = 1}^{\infty} \beta_j/2 = 1 - \beta$.
\end{proof}

\acchsclose*
\begin{proof}
This follows, since
\begin{align*}
|q_k(s^t) - q_k(h^t)|
\le \|s^t - h^t \|_{\infty}
\le \ahj
\end{align*}
where the first inequality is a consequence of $q_k$ having sensitivity one, and the second is from the \bounds.
\end{proof}

\accgiLB*
\begin{proof}
This follows from the sensitivity of $q_k$ and the fact that \threshcross{$k$} at time $\pj$.
\begin{align*}
q_k(\histshort^\pj)
&\ge q_k(s^\pj) - \ahj \tag*{(by Lemma~\ref{applem:acchsclose})} \\
&\ge {q_k}(s^\pj) + \gamma_k^j -\agammaj - \ahj \tag*{(by definition of $\agammaj$)}\\
&\ge \Ki^\pj - C_j^\pj - \agammaj - \ahj \tag*{(since \threshcross{$k$})} \\
&\ge \Ki^\pj - \amut - \atauj - 2\agammaj - \ahj \tag*{(by definition of $C_j^\pj$)}
\end{align*}
as required.
\end{proof}

\begin{restatable}{lemma}{accgiUBt}
\label{applem:accgiUBt}
\assumeconc Let $k \in [m]$ and suppose \notthreshcross{$k$} at time $t$. Then
\[
q_k(h^{t}) < \Kit + \left( \foursumj \right).
\]
\end{restatable}
\begin{proof}
Since \notthreshcross{$k$} at time $t$, either the condition in line \ref{appline:ifcross} was false or the condition in line \ref{appline:threshold} was false for $k$ at time $t$.
Thus, one of the following holds
\begin{alignat*}{3}
q_k(s^{t})
&< \Kit + \amut + \atauj
&&< \Kit + \threesumtj
&&\qquad \text{if line~\ref{appline:ifcross} was false, or} \\
q_k(s^{t})
&< \Kit - C_j^t + \agammaj
&&< \Kit + \threesumtj
&&\qquad \text{if line~\ref{appline:threshold} was false.}
\end{alignat*}
Combining this with Lemma~\ref{applem:acchsclose} gives the required bound.
\end{proof}

\begin{restatable}{lemma}{accfirstinterval}
\label{applem:accfirstinterval}
\assumeconc No interval is closed on the first time step, i.e., $p_1 > 1$.
\end{restatable}

\begin{proof}
 Note that $C^t_j = \amut + \atauj + \agammaj$.
 Thus, if the condition in line \ref{appline:ifcross} is true at time $p_j$, then the condition in line \ref{appline:threshold} is also true for some $k$.
 Said differently, whenever we end a segment, there also exists an $k$ such that \threshcross{$k$}. Using Lemma~\ref{applem:accgiLB} with $t = p_1$ gives us that
\[
 q_k(\histshort^{p_1})\geq \Ki^{p_1} - (\amu{p_1} + \atau{1} + 2\agamma{1} + \ah{1}).
\]
Note that since $\T_1^{p_1}> \amu{p_1} + \atau{1} + 2\agamma{1} + \ah{1}$,  this implies $q_k(\histshort^{p_1}) > 1$. As $q_k$ increases by at most $1$ per time step and $q_k(0,\dots,0)=0$, it follows that $p_1>1$.
\end{proof}

\accgiUB*

\begin{proof}
If \notthreshcross{$k$} at time $t$, then the bound follows from Lemma~\ref{applem:accgiUBt}.
Thus assume \threshcross{$k$} at time $t = p_j$. The first part of the claim follows if we show that \notthreshcross{$k$} at time $p_j - 1$, and $p_j-1\geq 1$, since then Lemma \ref{applem:accgiUBt} holds at time $p_j-1$ and $q_k$ has sensitivity one.
We show the claim by induction over the number of times \threshcross{$k$}.
\paragraph*{Case 1: $p_j$ is the first time \threshcross{$k$}.} Since $p_j$ is the first time \threshcross{$k$}, clearly, \notthreshcross{$k$} at time $p_j-1$. Further,
Lemma~\ref{applem:accfirstinterval} gives us that $p_j\geq p_1>1$ and therefore $p_j-1\geq 1$. %
Using Lemma~\ref{applem:accgiUBt} with $t = p_j - 1$, and the fact that $q_k$ has sensitivity one gives the required bound.

\paragraph*{Case 2: $p_j$ is not the first time \threshcross{$k$}.}
Clearly $p_j-1\ge 1$ holds in this case.
Then let $\plast$ be the last time at which \threshcross{$k$} before $p_j$.
By induction, we have for $\plast$ that
\begin{alignat*}{2}
q_k(\histshort^{\plast})
&< \Ki^{\plast} &&+ \foursum{\plast}{\last} + 1 \\
&\leq \Ki^{\pj} - \T^{\pj}_j &&+ \foursum{\pj}{j} + 1
\end{alignat*}
Since \threshcross{$k$} at time $p_j$, Lemma~\ref{applem:accgiLB} with $t = p_j$ gives
\[
q_k(\histshort^{p_{j}})\geq \Ki^{p_j} - (\amu{p_j} + \atauj + 2\agammaj + \ahj)
\]
Putting both these inequalities together, we get
\begin{align*}
|q_k(\histshort^{p_j})-q_k(\histshort^{p_{\ell}})|
&>
\left( \Ki^{p_j} - (\amu{p_j} + \atauj + 2\agammaj + \ahj\right) \\
&\ -
\left( \Ki^{\pj} - \T^{\pj}_j + (\foursum{\pj}{j} + 1) \right) \\
&= \T_j^{\pj}-\left(2\amu{p_j} + 2\atauj + 3\agammaj + 2\ahj + 1\right)
> 1,
\end{align*}
since $\T_j^t\ge3(C_j^t+\ahj)$ and $C_j^t = \threesumtj$.
As $q_k$ has sensitivity one, we have $p_j-p_{\ell}>1$, and thus, \notthreshcross{$k$} at time $p_j-1$. Lemma~\ref{applem:accgiUBt} with $t = p_{j}-1$ and the sensitivity of $q_k$ then gives the required bound.
\end{proof}

\accnumintervals*
\begin{proof}
Since $C_j^t = \threesumtj$, whenever the condition in line~\ref{appline:ifcross} is true,
then the condition in line~\ref{appline:threshold} is also true for some $k$, i.e., \threshcross{$k$}.
By Lemma~\ref{applem:accgiUB}, the query value of $q_k$ on the true histogram grows by at least one every time \threshcross{$k$}. Since ${\cmax}^t$ bounds the maximum number of times any query answer can increase before time $t$, there can be at most $\kcmaxt$ many threshold crossings for all $k \in [m]$ combined, and thus the lemma follows.
\end{proof}

\acccases*
\begin{proof}
Once we prove the first part, the second follows from Lemma~\ref{applem:accnumintervals}.
Since $t \in \jinterval$, the output of the mechanism at time $t$ is $q_k(s^{p_{j-1}})$. Thus the error at time $t$ is
\begin{align*}
|q_k(h^t) - q_k(s^{p_{j-1}})|
&\le |q_k(h^t) - q_k(h^{p_{j-1}})| + |q_k(h^{p_{j-1}}) - q_k(s^{p_{j-1}})| \\
&\le |q_k(h^t) - q_k(h^{p_{j-1}})| + \ahj \tag*{(by Lemma~\ref{applem:acchsclose})} \\
&\le q_k(h^t) - q_k(h^{p_{j-1}}) + \ahj \tag*{(since $q_k$ monotone and $t \ge p_{j_-1}$)}
\end{align*}

Our task reduces to giving an upper bound on $q_k(h^{t})$, and a lower bound on $q_k(h^{p_{j-1}})$.
We have two cases depending on whether $k$ has previously crossed a threshold. Let $\tfirst(k)$ be the first time in the whole input sequence that $k$ crosses the threshold.
\paragraph*{Case 1: $t < \tfirst(k)$.}
Since the histogram is empty before the first input arrives, $q_k(h^{p_0}) = 0$. Thus
\begin{align*}
q_k(h^{t}) - q_k(h^{p_{j-1}})
&\le q_k(h^{t}) - q_k(h^{p_0}) \\
&< \Kit + ( \foursumj + 1) \tag*{(by \ub)} \\
&= \T_{j}^t + ( \foursumj + 1) \tag*{(since $\Kit = \T_j^t$)} \\
&= O( \foursumj ) \tag*{(since $\T_j^t = 3(\foursumj)$)}
\end{align*}

\paragraph*{Case 2: $t \ge \tfirst(k)$.}
Let $\plast$ be the largest time step before $t$ when \threshcross{$k$}. Then
\begin{align*}
q_k(h^t)
&\le \Kit + (\foursumj+1) \tag*{(by \ub)} \\
\text{and} \quad q_k(h^{\plast})
&\geq \Kiplast \qquad - (\amu{\plast} + \atau{\last} + 2\agamma{\last} + \ah{\last}) \tag*{(by \lb)} \\
&\geq \Kit - \T_j^t - (\amut + \atauj + 2\agammaj + \ahj)
\end{align*}
Putting these together, we get
\begin{align*}
q_k(h^t) - q_k(h^{p_{j-1}})
&\le q_k(h^t) - q_k(h^{\plast}) \\
&= O( \foursumj ) \tag*{(since $\T_j^t = 3(\foursumj)$)}
\end{align*}
which proves the lemma.
\end{proof}

\eacc*
\begin{proof}
\bounds\ and Lemma~\ref{applem:acccases} together give us that Mechanism~\ref{appalg:histquery} is $\alphat$-accurate at time $t$, where
\begin{alignat*}{2}
\amut
&= O\left( \epsilon^{-1} \log (2t/\beta) \right), \quad \quad \,\,
\atauj
&&= O\left( \epsilon^{-1} \log (j/\beta) \right), \\
\agammaj
&= O\left( \epsilon^{-1} m \log (mj/\beta) \right), \quad
\ahj
&&= O( \errgen{j, \beta/ j^2} )
\end{alignat*}
Since $\alphat = O\left( \foursumj \right)$ with $j \le \kcmaxt$, this gives the lemma.
For the histogram mechanism from Fact~\ref{fact:counting},
\begin{align*}
\ahj
&= O\left(\epsilon^{-1} d \cdot \left( \log (j) \log(dj/\beta) + (\log j)^{1.5} \sqrt{\log (dj/\beta)}  \right)  \right) \\
&= O\left( \epsilon^{-1} d \log^2 (dj/\beta) \right), \
\end{align*}
which gives
\begin{align*}
\alphat = \eaccvalue
\end{align*}
as claimed.
\end{proof}

\subsection{Extensions}

For $(\epsilon, \delta)$-dp, we use an adaptively differentially private continual histogram mechanism $H$ and the Gaussian mechanism for $\gamma_k^j$, which gives an error bound of
\[
\alphat = O \left( \epsilon^{-1}  \log(1/\delta) \cdot \left( \sqrt{d} \log^{1.5} (dm{\cmax}^t/\beta) + \sqrt{m} \log (m{ \cmax }^t/\beta)  + \log t \right)\right)
\]
for $(\epsilon, \delta)$-differential privacy. We present the technical details in Appendix~\ref{appsec:histqueryed}.

 }

\section{Histogram Parameterized in the Number of Fluctuations}
\label{appsec:histswitches}

As earlier, we gave an overview of how the mechanism works on an input stream in the main body. Here, we present the privacy and utility proofs of Mechanism~\ref{appalg:few_switches_hist}. We add the variable $p_j$ to the mechanism purely for the proof, to denote the end of the $j$-th interval. In particular, $[p_{j-1}, p_j)$ is the \emph{$j$-th interval}.

\begin{algorithm}
\begin{algorithmic}[1]

\State {\bf Input: }{Stream $x^1,x^2,\ldots\in\{-1,0,1\}^d$, an $\epsilon/3$-differentially private continual histogram mechanism $H$, failure probability $\beta$, additive error bound $\err(t,\beta)$ that holds with probability $\geq 1-\beta$ for the output of $H$ at time step $t$.}
\State {\bf Output: }{Estimate $\histshort(t)$ at all $t\in \mathbb{N}$}
\LComment{Initialization of all parameters}
\State Initialize $H$\;
\State $\beta'=6\beta/\pi^2$, $\beta_t=\beta'/t^2$ for any $t\in\mathbb{N}$\;
\State $j\leftarrow 1$ \Comment{number of intervals}
\State $p_0 \gets 0$
\State $c_i\leftarrow 0$ for all $i \in [d]$ \Comment{column sum within interval}
\State $\mode_i\leftarrow 0$ for all $i\in[d]$
\State $t_{\diff}\leftarrow 0$ \Comment{length of current interval}
\State $\tau_1\leftarrow\Lap(9/\epsilon)$, $\tau_2\leftarrow\Lap(9/\epsilon)$\;

\State $H_{\out}=0^d$ \Comment{initial histogram}
\LComment{Process the input stream}
\For{$t\in\mathbb{N}$}
    \State $c_i\leftarrow c_i+x_i^t$ for all $i\in [d]$\;
    \State $t_{\diff}\leftarrow t_{\diff}+1$\;
    \State $\alpha_t\leftarrow \frac{27}{\epsilon}\log (4/\beta_t)+\frac{3d}{\epsilon}{\log (1/\beta_j)}$\;
    \State $\Thresh_{i,1}\leftarrow\mode_i\cdot t_{\diff}-2\alpha_t$, $i\in[d]$\;
    \State $\Thresh_{i,2}\leftarrow\mode_i\cdot t_{\diff}+2\alpha_t$, $i\in[d]$\;
    \State $\mu_1^t\leftarrow\Lap(18/\epsilon)$\;
    \State $\mu_2^t\leftarrow\Lap(18/\epsilon)$\;
    \If{$\min_i(c_i-\Thresh_{i,1})<\tau_1 - \mu_1^t$}\label{appcond:under_thresh_hist}
            \State $p_j \gets t$\Comment{close the current interval}\label{appline:close_int1}
            \State insert $(c_1,\dots,c_d)$ into $H$
            \State $H_{\out} \gets \textrm{output}(H)$\Comment{update histogram}
            \For{$i\in[d]$\label{appline:update_modes_1}}
                \State $\lambda_i=\Lap(3d/\epsilon)$\;
                \LComment{update modes}
                \If{$c_i+\lambda_i<\Thresh_{i,1}+\alpha_t$\label{appcond:update_mode_down}}
                    \State $\mode_i\leftarrow\max\{ \mode_i-1,-1 \}$\label{appline:update_mode_down}
                \EndIf
            \EndFor
            \State reset $c_i\leftarrow 0$ for all $i\in[d]$\label{appline:update_hist1}\;
            \State $j\leftarrow j+1$ \;
            \State $t_{\diff}\leftarrow 0$; $\tau_1\leftarrow\Lap(9/\epsilon)$
    \ElsIf{$\max_i(c_i-\Thresh_{i,2})>\tau_2-\mu_2^t$\label{appcond:over_thresh_hist}}
            \State $p_j \gets t$\Comment{close the current interval}\label{appline:close_int2}
            \State insert $(c_1,\dots,c_d)$ into $H$
            \State $H_{\out} \gets \textrm{output}(H)$\Comment{update histogram}
            \For{$i\in[d]$}\label{appline:update_modes_2}
                \State $\lambda_i=\Lap(3d/\epsilon)$
                \LComment{update modes}
                \If{$c_i+\lambda_i>\Thresh_{i,2}-\alpha_t$\label{appcond:update_mode_up}}
                    \State $\mode_i\leftarrow\min\{ \mode_i+1,1 \}$\label{appline:update_mode_up}
                \EndIf
            \EndFor
            \State reset $c_i\leftarrow 0$ for all $i\in[d]$\label{appline:update_hist2}\;
            \State $j\leftarrow j+1$\;
            \State $t_{\diff}\leftarrow 0$;  $\tau_2\leftarrow\Lap(9/\epsilon)$\EndIf
    \State \bf{output} $H_{\out}+\mode \cdot t_{\diff}$
\EndFor
\State $p_j\gets T$
\end{algorithmic}
\caption{Mechanism for \histogram\ parameterized in the number of fluctuations.}
\label{appalg:few_switches_hist}
\end{algorithm}

\subsection{Privacy}
\label{appsec:hs_privacy}

We show $\eps$-differential privacy for neighboring streams $x$ and $y$ which are allowed to differ at one time step by $\ell_\infty$-norm $1$, i.e., there exists a $t^*$ such that $||x^{t^{*}}-y^{t^{*}}||_{\infty}\leq 1$. Then, by group privacy, the mechanism is $2\eps$-differentially private for neighboring streams that are allowed to differ by $\ell_\infty$-norm $2$.

\begin{lemma}\label{applem:few_switches_priv}
Mechanism~\ref{appalg:few_switches_hist} is $\epsilon$-differentially private.%
\end{lemma}
To prove Lemma~\ref{applem:few_switches_priv}, we note that the outputs of Mechanism~\ref{appalg:few_switches_hist} are a post-processing of three parts: 1.) the mechanism computing $p_0,p_1,\dots$; 2.) the mechanism updating the values of $\mode$; 3.) and a continual histogram mechanism. Note that in Mechanism~\ref{appalg:few_switches_hist}, 1.) and 2.) do \emph{not} depend on the outputs of 3.), and 3.) is $\epsilon/3$-differentially private by assumption. Thus, it is enough to show that the parts of the mechanism computing $p_0,p_1,\dots$ and the values of $\mode$ together are $2\epsilon/3$-differentially private, as then Mechanism~\ref{appalg:few_switches_hist} is differentially private by the composition theorem (Fact~\ref{fact:composition_theorem}).
\begin{lemma}\label{applem:few_switches_priv_1}
    The mechanism obtained by running Mechanism~\ref{appalg:few_switches_hist} and outputting only the values of $p_0,p_1,\dots$, and the values of $\mode_i$ at every time step, is $2\epsilon/3$-differentially private.%
\end{lemma}
We first provide a short proof sketch.
The partitioning mechanism basically consists of instances of AboveThreshold (see~\cite{journals/fttcs/DworkR14}) on disjoint parts of the stream, and the updating of modes is a post-processing of a Laplace mechanism. Thus, intuitively, it should be enough to use parallel composition on both the AboveThreshold mechanisms and the Laplace mechanisms, and then use sequential composition on 1.) and 2.). However, there is a technicality that the inputs and thresholds to AboveThreshold mechanisms depend on their previous outputs, and the mode updates depend on the output of the AboveThreshold mechanism, which depends on the previous mode update. Thus, we cannot use a regular parallel composition argument and thus provide the full proof.

\begin{proof}
We focus on proving that the part of Mechanism~\ref{appalg:few_switches_hist} computing $p_0,p_1,\dots$ and the sequence of $\mode$ values is $2\epsilon/3$-differentially private. Since these do not depend on the outputs of 3.) the histogram mechanism, we can then use composition to argue that Mechanism \ref{appalg:few_switches_hist} is $\epsilon$-differentially private under continual observation.

Consider any possible partitioning $P=[p_0,p_1),\dots,[p_{\ell-1},p_{\ell})$ of $[0,T)$ and any stream of $M=M^0,\dots, M^T$, where $M^t=(\mode_1^t,\dots, \mode_d^t)$ and $\mode_i^t$ is the setting of variable $\mode_i$ at time $t$. We show that the probabilities of getting $P$ and $M$ when running Mechanism~\ref{appalg:few_switches_hist} on two neighboring streams are $2\epsilon/3$-close. %
For two neighboring streams $x$ and $y$, let $t^{*}$ be the time where $x$ and $y$ differ and let $[p_{j-1},p_{j})$ be the interval in $P$ which contains $t^{*}$. Note that $||x^{t^{*}}-y^{t^{*}}||_{\infty}\leq 1$. %
Before $p_{j-1}$, all probabilities are the same; conditioning on the run of $x$ and $y$ to be identical up to time $p_{j-1}$, we show that the probabilities of closing the next interval of $y$ at $p_j$ and updating the mode of $y$ to $\mode^{p_j}_i$ are close to the probabilities of doing so on $x$. Note that the original setting of $\mode_i$ is always 0 for all $i$ on both $x$ and $y$. Let $c^t_i(x)$ and $c^t_i(y)$ denote the values of $c_i$ at time $t$ on the run of $x$ and $y$, respectively.

We compare the probabilities of closing the $j$th interval at time $p_j$ for $x$ and $y$, and doing so because the conditions in line \ref{appcond:under_thresh_hist} resp. \ref{appcond:over_thresh_hist} were fulfilled (note that we need to differentiate these two, since the values of $\mode$ get updated differently in both cases). In order to close the $j$-th interval at time $p_j$, either the condition in line \ref{appcond:under_thresh_hist} or \ref{appcond:over_thresh_hist} has to be fulfilled at time $p_j$, and neither of them can be fulfilled at any time $t\in (p_{j-1},p_j)$. First, we analyze the probabilities that at time $p_j$, condition \ref{appcond:under_thresh_hist} was fulfilled, and neither condition \ref{appcond:over_thresh_hist} nor \ref{appcond:under_thresh_hist} was fulfilled at times $t\in(p_{j-1},p_j)$ (case A). %
 Let $z_1$ be some fixed value for $\tau_1$ in the interval $(p_{j-1},p_j]$, $z_2$ some fixed value of $\tau_2$ in the interval $(p_{j-1},p_j]$, and $m_1$ a fixed value of $\mu_1^{p_j}$. We have
 \begin{itemize}
     \item for all $p_{j-1} < t<p_j$: $\Pr[\min_i c^t_i(x)+\mu_1^t\geq\Thresh_1+z_1]\leq \Pr[\min c^t_i(y)+\mu_1^t\geq\Thresh_1+z_1-1]$,
     \item for all $p_{j-1} <t<p_j$: $\Pr[\max_i c^t_i(x)+\mu_2^t\leq\Thresh_2+z_2]\leq \Pr[\max_i c^t_i(y)+\mu_2^t\leq\Thresh_2+z_2+1]$,
     \item for $t=p_j$: $\Pr[\min_i c^t_i(x)+m_1<\Thresh_1+z_1]\leq \Pr[\min_i c^t_i(y)+m_1-2<\Thresh_1+z_1-1]$.
 \end{itemize}
 Thus, the same outcome can be achieved by shifting $\tau_1$ and $\tau_2$ by at most $1$, and $\mu_1^{p_j}$ by at most $2$. By integrating in the same way as in the proof of Lemma \ref{applem:topkprivacy}, since $\tau_1$ and $\tau_2$ are distributed according to $\Lap(9/\epsilon)$ and $\mu^t_1$ is distributed according to $\Lap(18/\epsilon)$, the distributions are $\epsilon/3$-close.

 Next, we analyze the probabilities that at time $p_j$, condition \ref{appcond:over_thresh_hist} was fulfilled, and neither \ref{appcond:over_thresh_hist} nor \ref{appcond:under_thresh_hist} was fulfilled at times $t\in(p_{j-1},p_j)$, and the condition in line \ref{appcond:under_thresh_hist} was not fulfilled at time $p_j$ (case B). Similar to before, let $z_1$ be some fixed value for $\tau_1$ in the interval $(p_{j-1},p_j]$, $z_2$ some fixed value of $\tau_2$ in the interval $(p_{j-1},p_j]$, and $m_2$ a fixed value of $\mu_2^{p_j}$. We have
 \begin{itemize}
     \item for all $p_{j-1} <t\leq p_j$: $\Pr[\min_i c^t_i(x)+\mu_1^t\geq\Thresh_1+z_1]\leq \Pr[\min c^t_i(y)+\mu_1^t\geq\Thresh_1+z_1-1]$,
     \item for all $p_{j-1} <t<p_j$: $\Pr[\max_i c^t_i(x)+\mu_2^t\leq\Thresh_2+z_2]\leq \Pr[\max_i c^t_i(y)+\mu_2^t\leq\Thresh_2+z_2+1]$,
     \item for $t=p_j$: $\Pr[\max_i c^t_i(x)+m_2>\Thresh_2+z_2]\leq \Pr[\max_i c^t_i(y)+m_2+2>\Thresh_2+z_2+1]$.%
 \end{itemize}
Thus, the same outcome can be achieved by shifting $\tau_1$ and $\tau_2$ by at most $1$, and $\mu_2^{p_j}$ by at most $2$. By integrating in the same way as in the proof of Lemma \ref{applem:topkprivacy}, since $\tau_1$ and $\tau_2$ are distributed according to $\Lap(9/\epsilon)$ and $\mu^t_2$ is distributed according to $\Lap(18/\epsilon)$, the distributions are $\epsilon/3$-close.

Conditioning on ending the $j$-th interval at $p_j$ and case A resp. case B, we need to argue about the updating of the modes (lines \ref{appline:update_mode_down} and \ref{appline:update_mode_up}, respectively). Since we add $\Lap(3d/\epsilon)$ to $c^{p_j}_i(x)$ resp. $c^{p_j}_i(y)$ for all $i$, and $||c^{p_j}(x)-c^{p_j}(y)||_1\leq d$, the probabilities of any output set are $\epsilon/3$-close by the properties of the Laplace mechanism. By post-processing, the probabilities of updating $\mode_i$ to $M^{p_j}$ are $\epsilon/3$-close on $x$ and $y$. Together, the probabilities of getting $[p_0,p_1),\dots,[p_{j-1},p_j)$ and $M^0,\dots,M^{p_j}$ are $2\epsilon/3$ close on $x$ and $y$. Since the rest of the mechanism depends only on $p_j$, $M^{p_j}$, and the input streams for times $t>p_j>t^{*}$, conditioning on $p_j$ and $M^{p_j}$ the probabilities are equal. We get that the probabilities of getting $P$ and $M$ are $2\epsilon/3$-close on $x$ and $y$. This shows that the partitioning mechanism together with the mode updates is $2\epsilon/3$-differentially private.
\end{proof}

\subsection{Accuracy}
\label{appsec:hs_accuracy}

Let $K_t$ be the number of times up to time $t$ that two consecutive input data rows differ, even if they differ just in one coordinate, i.e., the number of time such that $x^{t'} \ne x^{t'+1}$ for $t' \le t$.
\begin{lemma}
\label{applem:switcherror}
With probability at least $1-3\beta$, Mechanism \ref{appalg:few_switches_hist} has error at most $O(\err(9K_t+9,\beta) + \frac{d}{\epsilon}\log (9K_t+9) + \frac{1}{\epsilon}\log (t/\beta)))$ at all time steps $t$, where $\err(\ell,\beta)$ is the error of the histogram mechanism $H$ that holds with probability at least $1-\beta$ for all length-$\ell$ prefixes of the input stream.
\end{lemma}
Similarly to the accuracy proof in Section~\ref{appsec:hq}, our proof of Lemma~\ref{applem:switcherror} builds on first bounding the values of all random variables and the error of $H$ such that all bounds hold simultaneously with probability $1-3\beta$. We call this event $E$ and condition on it. Formally,

\begin{enumerate}
    \item Let $t \in [1,T]$. With probability at least $1-\beta_t/4$, any $Y$ drawn from $\Lap(b/\epsilon)$ for any $b$ has an absolute value of at most $\frac{b\log (4/\beta_t)}{\epsilon}$.
    Thus, by a union bound,
    with probability at least $1-\beta_t$, we have $|\mu_i^t|\leq \frac{18\log (4/\beta_t)}{\epsilon}$ and $|\tau_i|\leq \frac{9\log (4/\beta_t)}{\epsilon}$ for $i=1$ and $i = 2$ at any fixed time $t$.
    Using a union bound over all time steps $t$  and observing that $\sum_{t \in [1,T]} \beta_t = \sum_{t \in [1,T]} 6 \beta'/(\pi^2 t^2) \le \beta$, it follows that  with probability at least $1- \beta$ for $i =1$ and $i=2$ and for all time steps $t$ that
    $|\mu_i^t|\leq \frac{18\log (4/\beta_t)}{\epsilon}$.
    \item With probability at least $1-\beta_j$%
    , any $Y$ drawn from $\Lap(3d/\epsilon)$ has an absolute value of at most $\frac{3d}{\epsilon}\log (1/\beta_j)$. Thus, by a union bound as above, with probability at least $1-\beta$, we have $|\lambda_i|\leq \frac{3d}{\epsilon}{\log (1/\beta_j)}$ for all values of $\lambda_i$.
    \item By the properties of $H$, with probability at least $1-\beta$, the error of $H$ after $j$ inputs is at most $\err(j,\beta)$ for all $j$.
\end{enumerate}

{\em Event $E$} is the event that all three conditions hold, which happens with probability at least $1-3\beta$.
The proof now consists of two main lemmata, Lemma~\ref{appswitchacc:item} and Lemma~\ref{applem:switch}. To show them we need:

\begin{claim}
\label{appclaim:jointup}
Conditioned on event $E$ the following hold:

1)  If at some time $t$ the condition in line \ref{appcond:under_thresh_hist} is true for some $i \in [1,d]$, then the condition in line \ref{appcond:update_mode_down} is true for all $\ell \in [1,d]$ with $c_{\ell} = c_i$ and $\mode_{\ell}=\mode_i$, i.e., $\mode_{\ell}$ is updated for all such ${\ell}$.%

2)  If at some time $t$ the condition in line \ref{appcond:over_thresh_hist} is true for some $i \in [1,d]$, then the condition in line \ref{appcond:update_mode_up} is true for all $\ell \in [1,d]$ with $c_{\ell} = c_i$ and $\mode_{\ell}=\mode_i$, i.e., $\mode_{\ell}$ is updated for all such ${\ell}$.%

\end{claim}

\begin{proof}[Proof of Claim \ref{appclaim:jointup}]
    If the condition in line \ref{appcond:under_thresh_hist} or line \ref{appcond:over_thresh_hist} is true, $t=p_j$ for some $j$. In the following, we use variable names to denote their value at time $t$ when line~\ref{appcond:under_thresh_hist} is reached.
    Further, if the condition in line \ref{appcond:under_thresh_hist} is true, then by the assumed bounds on the random variables in event $E$, there exists a $c_i$ such that
    $c_i<\Thresh_{i,1}+\frac{27\log (4/\beta_t)}{\epsilon}$. Since $|\lambda_i|\leq \frac{3d}{\epsilon}{\log (1/\beta_j)}$, we have $c_i+\lambda_i \le c_i+\frac{3d}{\epsilon}{\log (1/\beta_j)}< \Thresh_{i,1}+\frac{27\log (4/\beta_t)}{\epsilon}+\frac{3d}{\epsilon}{\log (1/\beta_j)}= \Thresh_{i,1}+\alpha_t$, by definition of $\alpha_t$. Thus, the condition in line \ref{appcond:update_mode_down} is true for $i$.  Note that this is also case for all ${\ell} \in [1,d] \setminus i$ with $c_{\ell} = c_i$ and $\mode_{\ell}=\mode_i$, since then $\Thresh_{i,1}=\Thresh_{\ell,1}$.
    Now, to show that $\mode_{i}$ actually changed, we need to show that $\mode_{i}\neq -1$ at the beginning of the round. Assume $\mode_i= -1$ at the beginning of the round. Then $c_i+\lambda_i< \Thresh_{i,1}+\alpha_t=-t_{\diff}-\alpha_t$, thus $c_i< -t_{\diff}-\alpha_t+\frac{3d}{\epsilon}{\log (1/\beta_j)}<-t_{\diff}$, which is a contradiction since $t_{\diff}\geq c_i\geq -t_{\diff}$ always. By the same argument, $\mode_{\ell}$ is updated for all $\ell\in[1,d]$ with $c_{\ell}=c_i$  and $\mode_{\ell}=\mode_i$. This proves the first part of the claim.

     Similarly, if the condition in line \ref{appcond:over_thresh_hist} is true, then by the assumed bounds on the random variables, there exists an $i$ such that
    $c_i>\Thresh_{i,2}-\frac{27\log (4/\beta_t)}{\epsilon}$.  Note that this is also case for all $\ell \in [1,d]$, $j \ne i$ with $c_j = c_i$.
    Since $|\lambda_i|\leq \frac{3d}{\epsilon}{\log (1/\beta_j)}$, we have $c_i+\lambda_i\geq c_i-\frac{3d}{\epsilon}{\log (1/\beta_j)}> \Thresh_{i,2}-\frac{27\log (4/\beta_t)}{\epsilon}-\frac{3d}{\epsilon}{\log (1/\beta_j)} = \Thresh_{i,2}-\alpha_t$, by definition of $\alpha_t$. Thus, the condition in line \ref{appcond:update_mode_up} is true for $i$.  Note that this is also case for all $\ell \in [1,d]$, $\ell \ne i$ with $c_{\ell} = c_i$ and $\mode_{\ell}=\mode_i$.
    Now, to show that $\mode_i$ actually changed, we need to show that $\mode_i\neq 1$ at the beginning of the round. Assume it was. Then $c_i+\lambda_i> \Thresh_{i,2}-\alpha_t=t_{\diff}+\alpha_t$, thus $c_i>t_{\diff}+\alpha_t-\frac{3d}{\epsilon}{\log (1/\beta_j)}>t_{\diff}$, which is a contradiction since $t_{\diff}\geq c_i\geq -t_{\diff}$ always. By the same argument, $\mode_{\ell}$ is changed, for all $\ell\in[1,d]$ with $c_{\ell}=c_i$ and $\mode_{\ell}=\mode_i$. This proves the second part of the claim.
   \end{proof}

 The next lemma shows an error bound at time $t$ on the output of Mechanism~\ref{appalg:few_switches_hist} depending on the number of intervals ($n_t$) produced by the mechanism. This follows from (A) the fact that the histogram receives $n_t$ inputs, and (B) from the fact that we can bound the additional error accumulated within an interval by~$O(\frac{d}{\epsilon}\log (n_t) + \frac{1}{\epsilon}\log (t/\beta))$.
    \begin{lemma} %
Let $n_t$ be the number of closed intervals at the end of time step $t$, i.e., if $t=p_j$ for some $j$, then $n_t=j$, and if $t\in(p_{j-1},p_j)$, then $n_t=j-1$. Conditioned on event $E$, the maximum additive error for all time steps $t'\leq t$ is $O(\err(n_t,\beta) + \frac{d}{\epsilon}\log (n_t) + \frac{1}{\epsilon}\log (t/\beta))$.\label{appswitchacc:item}\end{lemma}
\begin{proof}[Proof of Lemma \ref{appswitchacc:item}]
We differentiate two cases.
\begin{enumerate}
    \item If $t=p_j$ for some $j\leq n_t$, the output is equal to $H_{\out}$, in which case the error is at most $\err(n_t,\beta)$, by the properties of $H$.
    \item If $t\in(p_{j-1},p_j)$, $j\leq n_t+1$, let $t=p_{j-1}+t_{\diff}$ and denote by $\out^t$ the output at time $t$.
    The error at time step $t$ is given by $\max_i|\sum_{t'=1}^t x_i^{t'}-\out^t|\leq \max_i(|\sum_{t'=1}^{p_{j-1}} x_i^{t'}-\out^{p_{j-1}}|+|c_i-\mode_i \cdot t_{\diff}|$, where $c_i$, $\mode_i$ and $t_{\diff}$ correspond to the values of those variables in the mechanism at time $t$. $|\sum_{t'=1}^{p_{j-1}} x_i^{t'}-\out^{p_{j-1}}|\leq \err(n_t,\beta)$ by the properties of $H$. Thus, we bound $|c_i-\mode_i \cdot t_{\diff}|$.
    If either condition \ref{appcond:over_thresh_hist} or \ref{appcond:under_thresh_hist} would have been true, then $t=p_j$ for some $j$, a contradiction.
    Thus, both conditions were false.
    Since we conditioned on the absolute values of $\mu_1^t, \mu_2^t$ being bounded by $\frac{18\log (4/\beta_t)}{\epsilon}$ and the absolute values of $\tau_1,\tau_2$ being bounded by $\frac{9\log (4/\beta_t)}{\epsilon}$, we have $c_i-\mode_i \cdot t_{\diff}+2\alpha_t\geq -\frac{27\log (4/\beta_t)}{\epsilon}$ for all $i$ (because \ref{appcond:under_thresh_hist} is false) and $c_i-\mode_i \cdot t_{\diff}-2\alpha_t\leq\frac{27\log (4/\beta_t)}{\epsilon}$ for all $i$ (because \ref{appcond:over_thresh_hist} is false). We have $|c_i-\mode_i \cdot t_{\diff}|\leq \frac{27\log (4/\beta_t)}{\epsilon}+2\alpha_t=O(\frac{d}{\epsilon}\log (1/\beta_j) + \frac{1}{\epsilon}\log (1/\beta_t))=O(\frac{d}{\epsilon}\log (n_t/\beta) + \frac{1}{\epsilon}\log (t/\beta))$. \qedhere
\end{enumerate}
\end{proof}

   \begin{lemma} Conditioned on event $E$, at any time step $t$, no more than $\min(t, 9K_t+9)$ intervals were closed at the end of time step $t$. \label{applem:switch}\end{lemma}
   To prove Lemma~\ref{applem:switch} we partition the stream of input rows  into {\em episodes} such that (1) an episode starts at time $t=1$ and also at time $t' \in [T]$ if row $x^{t'-1}$ and row $x^{t'}$ differ, and (2) an episode ends at the end of the stream and also at time $t'\in [T]$ if row $x^{t'}$ and row $x^{t'+1}$ differ. Our proof idea is to show that in no episode more than 9 intervals are closed. As there are at most $K_t+1$ episodes by the definition of episodes, Lemma~\ref{applem:switch} will follow. For episodes in which no interval is closed, nothing has to be shown. Thus, we study in the following episodes in which at least one interval is closed.
We first show the following claims, which basically show that if the mechanism receives the same row $x^t$  for a ``long enough'' time period, then eventually it will set the mode vector equal to $x^t$.

  Let $c_i^t$ (resp.~$\Thresh_{i,1}^t$ resp.~$\Thresh_{i,2}^t$)
  denote the value of $c_i$ (resp.~$\Thresh_{i,1}$ resp.~$\Thresh_{i,2}$) after the initial processing of time step $t$, i.e., in line~\ref{appcond:under_thresh_hist}.

\begin{claim}\label{appclaim:const}
Let $I$ be an episode in which at least one interval is closed and let $t^*$ be the first time step at which an interval is closed in $I$.
Conditioned on event $E$, if at any time step $t > t^*$ where $t \in I$ row $x^t$ equals the vector $\mode$ then no mode is updated in time step $t$.
\end{claim}
\begin{proof}
Consider a time step $t > t^*$ in $I$ and let $t'$ with $t^* \le t' < t$ be the last time before $t$ that a mode was changed. Then $c_i$ is reset to $0$ at time $t'$, for all $i\in[d]$.
By definition of $t'$, $\mode_i$ did not change since $t'$, and by definition of $I$, $x$ did not change since $t'$. Thus, $c^t = (t-t')\cdot x^t = t_{\diff} \cdot x^t = t_{\diff} \cdot \mode$.

Thus for each $i \in [d]$, $c_i^t - \Thresh_{i,1}^t = 2 \alpha_t$ and $c_i^t - \Thresh_{i,2}^t = -2 \alpha_t$ for all $i \in [d]$. It follows that  $\min_i (c_i^t - \Thresh_{i,1}^t) = 2\alpha_t$ and $\min_i (c_i^t - \Thresh_{i,2}^t) = -2\alpha_t$.
But since we condition on $E$, $|\mu_1^t|\le\frac{18 \log(4/\beta_t)}{\epsilon}$ and $|\tau_1| \le \frac{9\log(4/\beta_t)}{\epsilon}$, it follows that
$2\alpha_t = \frac{54}{\epsilon} \log (4/\beta_t) + \frac{6d}{\epsilon}\log (1/\beta_j) > \tau_1 - \mu_1^t$ and that $-2\alpha_t = -\frac{54}{\epsilon} \log (4/\beta_t) - \frac{6d}{\epsilon}\log (1/\beta_j) <\tau_2 - \mu_2^t$. Thus, the conditions on lines~\ref{appcond:under_thresh_hist} and ~\ref{appcond:over_thresh_hist} cannot hold and the interval  does not close at time $t$.
\end{proof}

\begin{claim}\label{appclaim:dec}
Let $I$ be an episode in which at least one interval is closed and let $t^*$ be the first time step at which an interval is closed in $I$.
Conditioned on event $E$,
at any time step $t> t^*$ where $t \in I$ and for any $i \in [d]$ if $\mode_i < x_i^t$ at the start of $t$ then $\mode_i$ will not decrease in time step $t$ and, symmetrically, if $\mode_i > x_i^t$ at the start of $t$ then $\mode_i$ will not increase in time step $t$.
\end{claim}
\begin{proof}
Consider a coordinate $i \in [d]$ and a time step $t > t^*$ in $I$ and let $t'$ with $t^* \le t' < t$ be the last time before $t$ that a mode was changed. Then $c_i$ is reset to $0$ at time $t'$, for all $i\in[d]$.
By definition of $t'$ $\mode_i$ did not change since $t'$, and by definition of $I$, $x_i$ did not change since $t'$. Thus, $c_i^t = (t-t')\cdot x_i^t = t_{\diff} \cdot x_i^t$.

If $x_i^t > \mode_i$, then
$c_i^t \ge  t_{\diff} \cdot \mode_i$. Due to the conditioning it follows that $c_i^t - \Thresh_{i,1}^t = c_i^t - t_{\diff} \cdot \mode_i + 2\alpha_t \ge 2\alpha_t = \frac{27}{\epsilon} \log (4/\beta_t) + \frac{3d}{\epsilon}\log (1/\beta_j) + \alpha_t >\alpha_t - \lambda_i$, and, thus, the condition in Line~\ref{appcond:update_mode_down} does not hold in time step $t$ and $\mode_i$ will not decrease.

If $x_i^t < \mode_i$, then
$c_i^t \le  t_{\diff} \cdot \mode_i$. Due to the conditioning it follows that $c_i^t - \Thresh_{i,2}^t = c_i^t  - t_{\diff} \cdot \mode_i - 2\alpha_t \le -2\alpha_t = -\frac{27}{\epsilon} \log (4/\beta_t) - \frac{3d}{\epsilon}\log (1/\beta_j) - \alpha_t < -\alpha_t - \lambda_i$, and, thus, the condition in Line~\ref{appcond:update_mode_up} does not hold in time step $t$ and $\mode_i$ will not increase.
\end{proof}
    The proof of Lemma~\ref{applem:switch} now consists of a careful case analysis using Claim~\ref{appclaim:const} and Claim~\ref{appclaim:dec} to show that within any episode, at most 9 intervals will be closed.
\begin{proof}[Proof of Lemma~\ref{applem:switch}]
    The claim that there are at most $t$ closed intervals follows trivially as at most one interval is closed in a single time step.

    We proceed to show that there are at most $9 K_t + 9$ closed intervals up to time step $t$.
    By Claim~\ref{appclaim:jointup} whenever an interval ends, at least one mode has to change. Thus, we will study in the following how many time steps  exist that contain a mode  update.
    By the definition of $K_t$ there are exactly $K_t + 1$ many episodes up to time step $t$ and, thus, it suffices to show that for any episode $I$, there are at most $9$ time steps in $I$ where a mode is updated.

    If no interval is closed in $I$, i.e., no mode is ever updated, the claim holds trivially for $I$. Thus in the following assume that there is at least one time step where a mode is updated and let $t^*$ be the first such time step.  As a shorthand we use $m_i$ to denote %
    the value of $\mode_i$ at the end of time step $t^*$.
    Note that  $c_i^{t^*} = 0$ and $x_i^t=x_i^{t^*}$ for all $i \in [d]$ and all $t\in I$.
    Thus for all subsequent time steps $t > t^*$ in $I$ and for all coordinates $i,j$ with $x_i^t = x_j^t$ it holds that $c_i^t = c_j^t$.
    Thus, by Claim~\ref{appclaim:jointup}, every time $t$ an interval is closed, there exists an $x^*\in\{-1,0,1\}$ and $m^*\in\{-1,0,1\}$ such that $\mode_i$ is updated for all $i$ with $x^t_i=x^*$ and $m_i=m^*$. As there are only 3 possible values that a variable $x^t_i=x^{t^*}_i$ can assume, it suffices to study for each value $x^* \in \{-1, 0, 1\}$  how often a coordinate $i$ with $x^{t^*}_i = x^*$ updates its mode within $I$.
  We analyze three cases:

    First we consider all coordinates $i$ with $x_i^{t^*} = -1$ and partition them into 3 subgroups depending on their $m_i$ value. For $m_i = -1$, Claim~\ref{appclaim:const} shows that there are no time steps with mode updates as the value of the mode and $x_i$ are equal. For $m_i = 0$, Claim~\ref{appclaim:dec} shows that there is at most one time step with mode updates, which decreases the mode to -1. For $m_i = 1$, Claim~\ref{appclaim:dec} shows that there are at most two time steps with mode updates, each decreasing the mode by 1. Thus there are at most 3 time steps with mode updates for all coordinates $i$ with $x_i^{t^*} = -1$.

    Next  consider all coordinates $i$ with $x_i^{t^*} = 0$ and partition them into 3 subgroups depending on $m_i$. For $m_i = 0$, there are no time steps with mode updates as the value of the mode and $x_i$ are equal. For $m_i = 1$, there is at most one time step with mode updates, which decreases the mode to 0. For $m_i = -1$, there is at most one time steps with mode updates,  which increases the mode to 0. Thus there are at most 2 time steps with mode updates for all coordinates $i$ with $x_i^{t^*} = 0$.

     Finally we consider all coordinates $i$ with $x_i^{t^*} = 1$ and partition them into 3 subgroups depending on $m_i$. For $m_i = 1$, Claim~\ref{appclaim:const} shows that there are no time steps with mode updates as the value of the mode and $x_i$ are equal. For $m_i = 0$, there is at most one time step with mode updates, which increases the mode to 1. For $m_i = -1$, there are at most two time steps with mode updates, each increasing the mode by 1. Thus there are at most 3 time steps with mode updates for all coordinates $i$ with $x_i^{t^*} = 1$.

     Thus, combined with the update at time step $t^*$ a mode update happens in at most 9 time steps in episode $I$.
     This concludes the proof.
    \end{proof}
Lemma~\ref{applem:switcherror} now follows by Lemma~\ref{appswitchacc:item} and Lemma~\ref{applem:switch}.
 
\section{\texorpdfstring{An $\Omega(d\cdot \log T)$ Lower Bound for Independently Differentially Private $d$-dimensional Binary Counting}{An Omega(d logT) Lower Bound for Independently Differentially Private d-dimensional Binary Counting}}\label{appsec:lowerbound}
So far, all upper bounds for \histogram\ and even for \maxsum\ which were not polynomial in $T$ relied on running $d$ independent binary counting mechanisms in parallel, and all achieved an error $\Omega(d \log T)$ for $\epsilon$-differential privacy. In this section we prove that using this strategy one cannot do better. For this, we formally define the following alternative version of differential privacy and neighboring streams of elements from $\{0,1\}^d$:

\begin{definition}[Independent differential privacy]
Let $x$ be a stream of $T$ elements from $\{0,1\}^d$. We say $x$ and $y$ are \emph{independently neighboring} if and only if for every $i\in [1,d]$ there exists a time step $t_i$ such that $x^t_i=y^t_i$ for all $t\neq t_i$. A mechanism $A$ is \emph{independently} $\epsilon$-\emph{differentially private} if it fulfills Definition \ref{def:dp} for independently neighboring $x$ and $y$.
\end{definition}

Note that this is a superset of the earlier definition of neighboring streams, i.e., all $x$ and $y$ which are neighboring are also independently neighboring, but not vice-versa. Thus, independent differential privacy is a stronger property then differential privacy.

We show the lower bound using a \emph{packing argument} \citep{Hardt2010}, which relies on the \emph{group privacy} property of differential privacy summarized in Fact \ref{appfact:group-privacy}.
We say $x$ and $y$ are $k$-neighboring if there exist $x=X_1, X_2, \dots, X_k=y$ such that $X_i$ and $X_{i+1}$ are neighboring for all $1\leq i <k$.
In the same way, we say $x$ and $y$ are independently $k$-neighboring if there exist $x=X_1, X_2, \dots, X_k=y$ such that $X_i$ and $X_{i+1}$ are independently neighboring for all $1\leq i <k$.
\begin{fact}\label{appfact:group-privacy} Let $A$ be an $\epsilon$-(independently) differentially private mechanism and $x$ and $y$ be (independently) $k$-neighboring. Then for all $S\in\mathrm{range}(A)$
    \begin{align*}
        P(A(x)\in S)\leq e^{k\epsilon}P(A(y) \in S)
    \end{align*}
    \end{fact}

Note that computing all $d$ column sums by running independent binary counting mechanisms fulfills independent $\epsilon$-differential privacy. Next we show that $\Omega(d\log T)$ noise is necessary for computing the noisy column sums in every time step while preserving independent $\epsilon$-differential privacy.

\begin{theorem}
Assume $d\leq \sqrt{T}$. Then there is a $T'=T'(\epsilon)$ such that any independently $\epsilon$-differentially private mechanism for computing all $d$ column sums cannot be $(\alpha,\beta)$-accurate for constant $\beta$ and $\alpha\leq \frac{d\ln T}{16\epsilon}$ for streams of length $T\geq T'$.
\end{theorem}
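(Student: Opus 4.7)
The approach is a packing argument combined with Fact~\ref{fact:group-privacy} (group privacy). The crucial point is that under independent $\epsilon$-differential privacy the relevant group-privacy penalty for two streams that disagree in some number of positions per coordinate is only the \emph{maximum} per-coordinate Hamming distance, not the sum. This lets one combine a one-dimensional packing of size $N$ into a $d$-dimensional packing of size $N^d$ without paying a factor of $d$ in the group-privacy exponent, which yields the $d\log T$ lower bound.

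Concretely, I would first build a one-dimensional packing. Set $N = \lfloor T/(2\alpha+1)\rfloor$ and for $j=0,1,\dots,N-1$ let $y_j \in \{0,1\}^T$ be the stream that is $1$ exactly on the positions $j(2\alpha+1)+1,\dots,(j+1)(2\alpha+1)$. Any two $y_j, y_{j'}$ with $j \ne j'$ disagree in exactly $2(2\alpha+1)$ positions, so they are $O(\alpha)$-neighboring. Moreover, for any $j<j'$ the running sum of $y_j$ at time $(j+1)(2\alpha+1)$ is $2\alpha+1$, whereas that of $y_{j'}$ is $0$; so any output sequence that is within $\alpha$ of the true column sums of $y_j$ at this time cannot also be within $\alpha$ for $y_{j'}$.

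Now I would form the product packing $P=\{x_{(j_1,\dots,j_d)} : j_i\in\{0,\dots,N-1\}\}$ of size $N^d$, where the $i$-th coordinate of $x_{(j_1,\dots,j_d)}$ is the one-dimensional stream $y_{j_i}$. Any two distinct elements of $P$ have per-coordinate Hamming distance at most $2(2\alpha+1)$, hence are independently $O(\alpha)$-neighboring. The accurate-output sets $S_x=\{\text{output sequences within } \alpha \text{ at every time step and coordinate for } x\}$ are pairwise disjoint, because in any differing coordinate the one-dimensional analysis produces a separating time step. Fix any $x_0\in P$; by Fact~\ref{fact:group-privacy} applied to independent $\epsilon$-dp, $\Pr[A(x_0)\in S_x]\ge e^{-O(\alpha\epsilon)}(1-\beta)$ for every $x\in P$, and summing the disjoint events yields
\[
1 \;\geq\; \sum_{x\in P}\Pr[A(x_0)\in S_x] \;\geq\; N^d \cdot e^{-O(\alpha\epsilon)}(1-\beta),
\]
hence $d\ln N \le O(\alpha\epsilon) + \ln\tfrac{1}{1-\beta}$.

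Finally, I would plug in $N=\lfloor T/(2\alpha+1)\rfloor$ and the assumption $\alpha\le d\ln T/(16\epsilon)$, together with $d\le\sqrt T$, to conclude. The hypothesis $d \le \sqrt T$ makes $2\alpha+1 \le \sqrt T\cdot \ln T/(8\epsilon)+1$, so for $T$ at least some threshold $T'(\epsilon)$ we have $\ln N \ge \tfrac12\ln T - O(\ln\ln T) \ge \tfrac14 \ln T$. Substituting into the packing inequality gives $\alpha = \Omega(d\log T/\epsilon)$, contradicting $\alpha\le d\ln T/(16\epsilon)$ once $T$ is sufficiently large. The main obstacle is really this last bookkeeping step: one has to verify that the implicit dependence of $N$ on $\alpha$ does not destroy the $\log T$ factor, and this is exactly why the hypothesis $d\le\sqrt T$ and the threshold $T'(\epsilon)$ appear in the statement. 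The packing construction and the application of group privacy themselves are almost mechanical.
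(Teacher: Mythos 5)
Your proposal is correct and follows essentially the same packing argument as the paper: both exploit the fact that independent $\epsilon$-DP pays only the maximum per-coordinate Hamming distance in the group-privacy exponent, form a product packing of one-dimensional block-indicator streams of size roughly $(T/\Theta(\alpha))^d$, observe the corresponding accuracy events are disjoint, and conclude via Fact~\ref{fact:group-privacy} with the $d\le\sqrt{T}$ hypothesis ensuring the per-coordinate packing still has $\Omega(\log T)$ entropy. The paper's $b=d\ln T/(8\epsilon)$ is exactly your $2\alpha$, so the constants line up as well.
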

\begin{proof}
    Let $b=\frac{d\ln T}{8\epsilon}$.
    We assume without loss of generality that $T$ is a multiple of $b$, otherwise we pad the stream with zero vectors. We start by dividing $[1,T]$ into $T/b$ blocks of length $b$, i.e. $B_1=[1,b]$, $B_2=[b+1,2b]$, \dots, $B_{T/b}=[T-b+1,T]$.
    Now, for any vector $v\in [T/b]^d$ define the following stream $x(v)$ and output data set $S(v)$:
    \begin{itemize}
        \item $x(v)^t_i=1$ if and only if $t\in[(v_i-1)\cdot b +1, v_i \cdot b]$, else $x(v)^t_i=0$. That is, for every coordinate $d$ there is exactly one block which consists of only ones, and that block is the one specified by the $i$th coordinate in $v$.
        \item Denote $s_i^t$ the estimate for the $i$th column sum output by the mechanism in time $t$. $S(v)$ includes all outputs such that $s_i^t<b/2$ for all $t\leq (v_i-1)\cdot b$ and $s_i^t>b/2$ for all $t\geq v_i\cdot b$.
    \end{itemize}
Now let $\beta=1/3$ and assume there is a mechanism $A$ which is $(\alpha,\beta)$ accurate for $\alpha=\frac{d\ln T}{16\epsilon}=b/2$.
 Then $P(A(x(v))\in S(v))\geq \frac{2}{3}$.
    Further, we have that for any $v,v'\in [T/b]^d$, $x(v)$ and $x(v')$ are independently $2b$-neighboring: for every coordinate $i$, $x(v')_i$ and $x(v)_i$ differ in at most $2b$ timesteps.
 Therefore, by Fact \ref{appfact:group-privacy}, $P(A(x(v))\in S(v'))\geq \frac{2}{3}e^{-2b\epsilon}$.
 Additionally, all $S(v)$ are disjoint for distinct $v$.
Thus
    \begin{align*}
    1&\geq \sum_{v\in [T/b]^d} \frac{2}{3}e^{-2b\epsilon}\\
    &=\frac{2}{3}\left(\frac{8T\epsilon}{d\ln T}\right)^d e^{-\frac{d\ln T}{4}}\\
    &\geq \frac{2}{3}\left(\frac{8\sqrt{T}\epsilon}{\ln T}\right)^d(1/\sqrt[4]{T})^d\\
    &=\frac{2}{3}\left(\frac{8\sqrt[4]{T}\epsilon}{\ln T}\right)^d>1
    \end{align*}
    for large enough $T$, which is a contradiction.
\end{proof}
\section{Usage of Concurrent Composition}
\label{appsec:concurrent}

Our $\epsilon$-dp result could alternatively be shown as follows: One could use the result of \cite{QiuYi} to argue adaptive parallel composition for the partitioning mechanism, then Fact~\ref{fact:epsadaptive} by \cite{neurips2022} to argue that the $\epsilon$-dp partitioning mechanism in the continual release model is also $\epsilon$-dp in the adaptive continual release model, and then use the result of \citet{VW21concurrent} to concurrently compose the adaptive partitioning and adaptive histogram mechanisms. However, this would not reduce the technical complexity of the proof, and also not be self-contained.

Moreover, this proof strategy does not work for $(\epsilon, \delta)$-dp at all. Adaptive parallel composition in the $(\epsilon, \delta)$-dp setting is an open problem.
\citet{GMPS24neighboring} give an adaptive parallel composition theorem for $(\epsilon, \delta)$-dp, but their result assumes that the partition of the dataset is given beforehand, while we require that the partition of the dataset is also performed adaptively. Further, it is not enough to show that the partitioning mechanism is differentially private, we would need to show that it is \emph{adaptively} private since a general transformation as in the $\epsilon$-dp case does not exist here, and is only known for specific mechanisms. %
\section{\texorpdfstring{Extension to $(\epsilon, \delta)$-Differential Privacy}{Extension to (epsilon, delta)-Differential Privacy}}
\label{appsec:histqueryed}

We will use noise drawn from the Normal distribution for our mechanism. The mechanism constructed using noise drawn from the Normal distribution is known as the Gaussian mechanism, which satisfies $(\epsilon, \delta)$-dp.

\begin{definition}[Normal Distribution] The \emph{normal distribution} centered at $0$ with variance $\sigma^2$ is the distribution with the probability density function
\begin{align*}
f_{N(0,\sigma^2)}(x)=\frac{1}{\sigma\sqrt{2\pi}}\exp\left(-\frac{x^2}{2\sigma^2}\right)
\end{align*}

\end{definition}
 We use $X\sim N(0,\sigma^2)$ or sometimes just $N(0,\sigma^2)$ to denote a random variable $X$ distributed according to $f_{N(0,\sigma^2)}$.

\begin{fact}[Theorem A.1 in \cite{journals/fttcs/DworkR14}: Gaussian mechanism]\label{applem:gaussianmech}
 Let $f$ be any function $f:\chi\rightarrow \mathbb{R}^m$ with $L_2$-sensitivity $\Delta_2$.
Let $\epsilon\in(0,1)$, $c^2>2\ln(1.25/\delta)$, and $\sigma\geq c\Delta_2(f)/\epsilon$. Let $Y_i\sim N(0,\sigma^2)$ for $i\in[m]$. Then the mechanism defined as:
\begin{align*}
A(x)=f(x)+(Y_1,\dots,Y_m)
\end{align*}
satisfies $(\epsilon,\delta)$-differential privacy.
\end{fact}

We use the following continual histogram mechanism $H$ introduced by \cite{DBLP:conf/icml/FichtenbergerHU23}, which achieves an error of $O(\epsilon^{-1}\log(1/\delta)\log t\sqrt{d\ln(dt)})$ at time step $t$.
Since their mechanism fulfills the conditions of Theorem 2.1 of \cite{neurips2022}, the same privacy guarantees hold for their mechanism in the adaptive continual release model.

\begin{fact}[$(\epsilon,\delta)$-differentially private continual histogram against an adaptive adversary]
\label{appfact:epsdelthist}
    There is an $(\epsilon,\delta)$-differentially private mechanism in the adaptive continual release model for continual histogram that with probability $\ge 1-\beta$, has error bounded by $O(\epsilon^{-1}\log(1/\delta)\log t\sqrt{d\ln(dt/\beta)})$ at time $t$.
\end{fact}

\subsection{Histogram Queries}
We make the following changes to the mechanism to obtain an $(\epsilon, \delta)$-dp mechanism for histogram queries.

\begin{enumerate}
    \item Initialize an $\edee$-adaptively dp continual histogram mechanism $H$.
    \item Sample $\gamma_k^j \sim \edgammarv$.
    \item Set $\agammaj$ to $\edu$.
\end{enumerate}

\paragraph{Privacy.}
We detail the changes to the privacy proof from the $\epsilon$-dp case.
As in the $\epsilon$-dp case, we need to now show that
\[
\Pr\left[ \alg{x} \in S \right]
\le e^{\epsilon} \cdot \Pr\left[ \alg{y} \in S \right] + \delta
\]
Since $H$ is $\edee$-adaptively differentially private, we get that
\begin{align*}
\Pr(V_{H,Adv(x,y)}^{(x)}\in S)\leq e^{\he}\Pr(V_{H,Adv(x,y)}^{(y)}\in S) + \hd
\end{align*}
and
\begin{align*}
\Pr(V_{H,Adv(x,y)}^{(y)}\in S)\leq e^{\he}\Pr(V_{H,Adv(x,y)}^{(x)}\in S) + \hd.
\end{align*}
Thus all we would need to show would be
\begin{align}\label{appeq:edviewsxyswitchedK}
\Pr(V_{H,Adv(x,y)}^{(x)}\in S)\leq e^{2\epsilon/3}\Pr(V_{H,Adv(y,x)}^{(x)}\in S) + \delta/2,
\end{align}
since then
\begin{align}\begin{split}\label{appeq:edfullprivacyK}
\Pr(\mathcal{A}(x)\in S)&=\Pr(V_{H,Adv(x,y)}^{(x)}\in S)\\
&\leq e^{2\epsilon/3}\Pr(V_{H,Adv(y,x)}^{(x)}\in S) + \delta/2\\
&\leq e^{\epsilon}\Pr(V_{H,Adv(y,x)}^{(y)}\in S) + \delta\\
&=e^{\epsilon}\Pr(\mathcal{A}(y)\in S) + \delta
\end{split}
\end{align}
The partitioning is still $e^{\epsilon/3}$-close by the same arguments since we use the same random variables as in the $\epsilon$-dp case.
For the thresholds, note that conditioned on all previous outputs of $H$ and $p_j$ being equal, $q_k(s^{p_j}(x))$ and $q_k(s^{p_j}(y))$ can differ by at most $1$ for each $k\in[m]$.
Thus the $L_2$ difference between the two vectors is at most $\sqrt{m}$. By Lemma~\ref{applem:gaussianmech} for the Gaussian mechanism, adding $\edgammarv$ noise to every ${q_k}(s^{p_j}(y))$ ensures that the distributions of ${q_k}(s^{p_j}(x)) + \gamma_k^j$ and ${q_k}(s^{p_j}(y)) + \gamma_k^j$ are $(e^{\epsilon/3}, \delta/2e^{2\epsilon/3})$-close for all $k\in[m]$. Since the condition in line \ref{appline:threshold} only depends on those, this implies that the probabilities of executing line \ref{appline:threshold} on any subset of $[m]$ on $\mathrm{run}(x)$ and $\mathrm{run}(y)$ are $(e^{\epsilon/3}, \delta/2e^{\epsilon/3})$-close, as required.

\paragraph{Accuracy.}

We have that \bounds\ holds with $\amut, \atauj$ as earlier, $\agammaj=\edu$ and $\ahj = O(\epsilon^{-1}\log(1/\delta)\log j\sqrt{d\ln(dj/\beta)})$. Thus, by Lemma~\ref{applem:acccases}, the mechanism has error at most
\[
\alphat = O \left( \epsilon^{-1}  \log(1/\delta) \cdot \left( \sqrt{d} \log^{3/2} (dm{\cmax}^t/\beta) + \sqrt{m} \log (m{\cmax}^t/\beta)  + \log t \right)\right)
\]
at all time steps $t$ with probability at least $1 - \beta$
as required.

\subsection{Histogram Parameterized in the Number of Fluctuations}
We make the following changes to the mechanism to obtain an $(\epsilon, \delta)$-dp mechanism for \histogram\ parameterized in the number of fluctuations.

\begin{enumerate}
    \item Initialize an $(\eps/3, \delta/2)$-adaptively dp continual histogram mechanism $H$.
    \item Sample $\gamma_i^j \sim N(0, 18d \ln (4 e^{2\eps/3}/\delta) / \eps^2$.
    \item Replace $\frac{3d}{\eps} \log(1/\beta_j)$ with $6 \eps^{-1} \sqrt{d \ln (4 e^{2 \eps/3}/\delta \beta_j)}$.
\end{enumerate}

\paragraph{Privacy.}
We detail the changes to the privacy proof from the $\epsilon$-dp case.

We will show that the computation of $p_0, p_1,\ldots$ along with the sequence of $\mode$ value updates is $(2\eps/3, \delta/2)$-dp. Then the result follows by basic composition with the histogram mechanism.

The partitioning is still $e^{\epsilon/3}$-close by the same arguments since we use the same random variables as in the $\epsilon$-dp case.
For the modes, note that conditioning on ending the $j$-th interval at $p_j$, $\|c^{p_j}(x) - c^{p_j}(y)\|_2 \le \sqrt{d}$. Thus, by Lemma~\ref{applem:gaussianmech}, adding $N(0, 18d \ln (4 e^{2\eps/3}/\delta) / \eps^2$ noise to each $c^{p_j}_i$ ensures that the distributions on $x$ and $y$ are $(e^{\epsilon/3}, \delta/2e^{\epsilon/3})$-close. Thus both the partitioning and mode updates together are $(2\eps/3, \delta/2)$-dp as required.

\paragraph{Accuracy.}

Using the histogram mechanism from Fact~\ref{appfact:epsdelthist}, and replacing $\frac{d}{\eps} \log(1/\beta_j)$ in the accuracy proofs with $6 \eps^{-1} \sqrt{d \ln (4 e^{2 \eps/3}/\delta \beta_j)}$, we get that the mechanism has error at most
\[
O\left( \eps^{-1} \log(1/\delta) \cdot \left(
\sqrt{d} \log^{3/2}(dK/\beta) + \log t
\right)\right)
\]
at all time steps $t$ with probability at least $1 - \beta$.
\section{The Sparse Vector Technique}
\label{appsec:sparsevector}

The sparse vector technique is based on an algorithm by \cite{DBLP:conf/stoc/DworkNRRV09} and was described more fully by \cite{journals/fttcs/DworkR14}. The version described in Algorithm \ref{appalg:sparsevector} is by \cite{journals/pvldb/LyuSL17} for $c=1$ (the main difference is that it allows different thresholds for every query).

\begin{algorithm}[!htbp]
\begin{algorithmic}[1]
\State{\textbf{Input: }}{Data Set $D$, Sensitivity bound $\Delta$, thresholds $\T_1,\T_2,\dots$, and queries $q_1,q_2,\dots$ which are have sensitivity at most $\Delta$}

\State $\tau=\Lap(2\Delta/\epsilon)$

\For{$i=1,\dots,$}
    \State $\mu_i=\Lap(4\Delta/\epsilon)$
    \If{$q_i(D)+\mu_i>\T_i+\tau$}
        \State output $a_i=\yes$
        \State {\bf Abort}
    \Else
        \State output $a_i=\no$
    \EndIf
\EndFor
\caption{AboveThreshold}
\label{appalg:sparsevector}
\end{algorithmic}
\end{algorithm}

\begin{lemma}[\citealp{journals/fttcs/DworkR14, journals/pvldb/LyuSL17}]\label{applem:SVpriv}
    Algorithm \ref{appalg:sparsevector} is $\epsilon$-differentially private.
\end{lemma}

\begin{lemma}[\citealp{journals/fttcs/DworkR14, journals/pvldb/LyuSL17}]\label{applem:SVacc}
    Algorithm \ref{appalg:sparsevector} fulfills the following accuracy guarantees for $\alpha=\frac{8(\ln k + \ln(2/\beta))}{\epsilon}$:
    For any sequence $q_1,\dots,q_k$ of queries it holds with probability at least $1-\beta$,
    \begin{enumerate}
        \item for $i$ such that $a_i=\yes$ we have
        \begin{align*}
            q_i(D)\geq \T_i-\alpha,
        \end{align*}
        \item for all $i$ such that $a_i=\no$ we have
        \begin{align*}
            q_i(D)\leq \T_i+\alpha.
        \end{align*}
    \end{enumerate}
\end{lemma}

\end{document}